\newif\iffull
\newif\ifieeesp 
\def\BibTeX{{\rm B\kern-.05em{\sc i\kern-.025em b}\kern-.08em
    T\kern-.1667em\lower.7ex\hbox{E}\kern-.125emX}}
\newcommand*{\affaddr}[1]{#1} 
\newcommand*{\affmark}[1][*]{\textsuperscript{#1}}
\definecolor{dkgreen}{rgb}{0,0.6,0}
\definecolor{gray}{rgb}{0.5,0.5,0.5}
\definecolor{mauve}{rgb}{0.58,0,0.82}
\tiny\color{gray},
\renewcommand{\paragraph}[1]{\vspace{1mm}\noindent \textbf{#1.\ }}
\setlist{noitemsep}
\iffull \usepackage{hyperref} \fi
\iffull \hypersetup{
    colorlinks = true,
    citebordercolor = {lime},
    linkbordercolor = {purple},
    urlbordercolor  = {magenta},
    citecolor = {teal},
    linkcolor = {purple},
    urlcolor  = {magenta}
}\fi
\definecolor{light-gray}{gray}{0.80}
\tikzstyle{block} = [rectangle, minimum width=1cm, minimum height=1cm,text centered, draw=black]
\tikzstyle{arrow} = [thick,->,>=stealth]
\title{Redactable Blockchain in the Permissionless Setting}
\author{\IEEEauthorblockN{Dominic Deuber\IEEEauthorrefmark{1}, Bernardo Magri\IEEEauthorrefmark{2} and Sri Aravinda Krishnan Thyagarajan\IEEEauthorrefmark{1}} \IEEEauthorblockA{\IEEEauthorrefmark{1}Friedrich-Alexander University Erlangen-N\"urnberg, Germany\\
E-mail: \{deuber, thyagarajan\}@cs.fau.de} \IEEEauthorblockA{\IEEEauthorrefmark{2}
Aarhus University, Denmark\\
E-mail: magri@cs.au.dk}
}
\tikzset{%
    box/.style    = {
rectangle split, rectangle split, 
                      rectangle split parts={#1},
                       font            = \sffamily\footnotesize,
                       text width      = 4cm, 
                       draw
                        },
  wbox/.style    = {
rectangle, 
                       font            = \sffamily\footnotesize,
                       text width      = 4cm, 
                       yshift=2.5em,
                       draw,
                       dashed
                        },
 }
\DeclareRobustCommand{\rvdots}{%
  \vbox{
	\vspace{0.7em}
    \baselineskip4\p@\lineskiplimit\z@
    \kern-\p@
    \hbox{.}\hbox{.}\hbox{.}
    \vspace{0.7em}
  }}
\def\thickhline{%
  \noalign{\ifnum0=`}\fi\hrule \@height \thickarrayrulewidth \futurelet
   \reserved@a\@xthickhline}
\def\@xthickhline{\ifx\reserved@a\thickhline
               \vskip\doublerulesep
               \vskip-\thickarrayrulewidth
             \fi
      \ifnum0=`{\fi}}
\newlength{\thickarrayrulewidth}
\newtheorem{theorem}{Theorem}
\newtheorem{definition}{Definition}
\mathchardef\mhyphen="2D
\newcommand{\NN}{\mathbb{N}}
\newcommand{\spar}{\kappa}
\newcommand{\bit}{\{0,1\}}
\newcommand{\advA}{\mathcal{A}}
\newcommand{\ctr}{\mathit{ctr}}
\newcommand{\round}{r}
\newcommand{\chain}{\mathcal{C}}
\newcommand{\head}[1]{\mathsf{Head}(#1)}
\newcommand{\block}{\langle s,x,\ctr \rangle}
\newcommand{\blockmod}{\langle s, x,\ctr,y \rangle}
\newcommand{\blockmodi}[1]{\langle s_{#1},x_{#1},\ctr_{#1},y_{#1} \rangle}
\newcommand{\blockprime}{\langle s',x',\ctr' \rangle}
\newcommand{\length}[1]{\mathsf{len}(#1)}
\newcommand{\prune}[2]{#1^{\lceil #2}}
\newcommand{\pruneback}[2]{{}^{#2\rceil}#1}
\newcommand{\policy}{\mathcal{P}}
\newcommand{\redactTx}{\mathsf{editTx}}
\newcommand{\slot}{\mathit{sl}}
\newcommand{\redactreq}{\mathsf{proposeEdit}}
\newcommand{\votefunc}{\mathsf{Vt}}
\newcommand{\decide}{\mathsf{validateCand}}
\newcommand{\decideext}{\mathsf{validateCandExt}}
\newcommand{\rcbpool}{\mathcal{R}}
\newcommand{\genesis}{\mathsf{genesis}}
\newcommand{\voting}{\mathtt{voting}}
\newcommand{\accept}{\mathtt{accept}}
\newcommand{\reject}{\mathtt{reject}}
\newcommand{\candidateblk}{B^\star}
\newcommand{\editreq}{\ensuremath{B^\star}}
\newcommand{\id}{\mathsf{ID}}
\newcommand{\Blk}{\ensuremath{\Gamma}}
\newcommand{\chainValid}{\ensuremath{\mathsf{validateChain}}}
\newcommand{\chainValidext}{\ensuremath{\mathsf{validateChainExt}}}
\newcommand{\blockValid}{\ensuremath{\mathsf{validateBlock}}}
\newcommand{\blockValidext}{\ensuremath{\mathsf{validateBlockExt}}}
\newcommand{\getChain}{\ensuremath{\mathsf{updateChain}}}
\newcommand{\broadcast}{\ensuremath{\mathsf{broadcast}}}
\newcommand{\tx}{\mathsf{Tx}}
\newcommand{\data}{\mathit{x}}
\newcommand{\amount}{\alpha}
\newcommand{\env}{\mathcal{Z}}
\newcommand{\adv}{\mathcal{A}}
\newcommand{\view}{\mathsf{view}}
\newcommand{\userSet}{\mathcal{U}}
\title{Redactable Blockchain in the Permissionless Setting}
\author{%
Dominic Deuber\affmark[1], Bernardo Magri\affmark[2]\thanks{Work done while the author was affiliated with Friedrich-Alexander University Erlangen-N\"urnberg.}\  and Sri Aravinda Krishnan Thyagarajan\affmark[1]\\ \\
\affaddr{\affmark[1]\textit{Friedrich-Alexander-Universit\"at Erlangen-N\"urnberg, Germany}}\\
\affaddr{\affmark[2]\textit{Concordium Blockchain Research Center, Aarhus University, Denmark}}\\ \\
E-mail: \{deuber,thyagarajan\}@cs.fau.de\\ magri@cs.au.dk
}
\date{\today}
\begin{document}

\maketitle

\iffull
\thispagestyle{empty}
\begin{abstract}
	Bitcoin is an immutable permissionless blockchain system that has been extensively used as a public bulletin board by many different applications that heavily relies on its immutability. However, Bitcoin's immutability is not without its fair share of demerits. Interpol exposed the existence of harmful and potentially illegal documents, images and links in the Bitcoin blockchain, and since then there have been several qualitative and quantitative analysis on the types of data currently residing in the Bitcoin blockchain. 
	Although there is a lot of attention on blockchains, surprisingly the previous solutions proposed for data redaction in the permissionless setting are far from feasible, and require additional trust assumptions. Hence, the problem of harmful data still poses a huge challenge for law enforcement agencies like Interpol (Tziakouris, IEEE S\&P'18).
	
	We propose the first efficient redactable blockchain for the permissionless setting that is easily integrable into Bitcoin, and that does not rely on heavy cryptographic tools or trust assumptions. Our protocol uses a consensus-based voting and is parameterised by a policy that dictates the requirements and constraints for the redactions; if a redaction gathers enough votes the operation is performed on the chain. As an extra feature, our protocol offers public verifiability and accountability for the redacted chain. Moreover, we provide formal security definitions and proofs showing that our protocol is secure against redactions that were not agreed by consensus. Additionally, we show the viability of our approach with a proof-of-concept implementation that shows only a tiny overhead in the chain validation of our protocol when compared to an immutable one.

\end{abstract}
\newpage
\clearpage
\else

\fi

\iffull
\tableofcontents
\thispagestyle{empty}
\newpage
\setcounter{page}{1}
\fi

\ifieeesp
\renewcommand{\thefootnote}{\fnsymbol{footnote}}
\footnotetext[2]{Work done while the author was affiliated with Friedrich-Alexander University Erlangen-N\"urnberg.}
\renewcommand{\thefootnote}{\arabic{footnote}}
\fi

\ifieeesp
\begin{IEEEkeywords}
    Blockchain, Bitcoin, Redactable Blockchain, GDPR
\end{IEEEkeywords}
\fi

\section{Introduction}

%
%

 Satoshi Nakamoto's 2008 proposal of Bitcoin~\cite{nakamoto2008bitcoin} has revolutionised the financial sector. It helped realise a monetary system without relying on a central trusted authority, which has since then given rise to hundreds of new systems known as cryptocurrencies. Interestingly however, a closer look into the basics of Bitcoin sheds light on a new technology, blockchains. Ever since, there has been a lot of ongoing academic research~\cite{EC:GarKiaLeo15,kiayias2017ouroboros,cryptoeprint:2018:378,breidenbach2018enter} on the security and applications of blockchains as a primitive. A blockchain in its most primitive form is a decentralised chain of agreed upon blocks containing timestamped data.  
 
  A consensus mechanism supports the decentralised nature of blockchains. There are different types of consensus mechanisms that are based on different resources, such as Proof of Work (PoW) based on computational power, Proof of Stake (PoS) based on the stake in the system, Proof of Space based on storage capacity, among many others. Typically, users in the system store a local copy of the blockchain and run the consensus mechanism to agree on a unified view of the blockchain. These mechanisms must rely on non-replicability of resources to be resilient against simple sybil attacks where the adversary spawn multiple nodes under his control. 
  
  Apart from its fundamental purpose of being a digital currency, Bitcoin exploits the properties of its blockchain, as in being used as a tool for many different applications, such as timestamp service~\cite{gipp2015decentralized,gipp2016securing}, to achieve fairness and correctness in secure multi-party computation~\cite{SP:ADMM14,FCW:ADMM14,C:BenKum14,CCS:KumBen14}, and to build smart contracts~\cite{SP:KMSWP16}. It acts as an immutable public bulletin board, supporting the storage of arbitrary data through special operations. For instance, the $\mathtt{OP\_RETURN}$ code, can take up to 80 bytes of arbitrary data that gets stored in the blockchain. With no requirement for centralised trust and its capability of supporting complex smart contracts, communication through the blockchain has become practical, reasonably inexpensive and very attractive for applications. 
   
  
 
\paragraph{Blockchain and Immutability} 
The debate about the immutability of blockchain protocols has gained worldwide attention lately due to the adoption of the new General Data Protection Regulation (GDPR) by European states. Several provisions of the GDPR regulation are inherently incompatible with current permissionless immutable blockchain proposals (e.g.,\ Bitcoin and Ethereum)~\cite{ibanez2018blockchains} as it is not possible to remove any data (addresses, transaction values, timestamp information) that has stabilised\footnote{A transaction (or data) is considered stable in the blockchain when it is ``deep" enough into the chain. We formally define this property in~\cref{sec:blockchain_properties}.} in the chain in such protocols. Since permissionless blockchains are completely decentralised and allow for any user to post transactions to the chain for a small fee, malicious users can post transactions to the system containing illegal and/or harmful data, such as (child) pornography, private information or stolen private keys, etc. The existence of such illicit content was first reported in~\cite{interpol2015} and has remained a challenge for law enforcement agencies like Interpol~\cite{tziakouris2018cryptocurrencies}. Moreover, quantitative analysis in the recent work of Matzutt et al.~\cite{matzutt2018quantitative} shows that it is not feasible to ``filter" all data from incoming transactions to check for malicious contents before the transaction is inserted into the chain. Therefore, once it becomes public knowledge that malicious data was inserted (and has stabilised) into the chain, the honest users are faced with the choice of either, willingly broadcast illicit (and possibly illegal~\cite{matzutt2018quantitative,coindesk2018}) data to other users, or to stop using the system altogether.

This effect greatly hinders the adoption of permissionless blockchain systems, as honest users that are required to comply with regulations, such as GDPR, are forced to withdraw themselves from the system if there is no recourse in place to deal with illicit data inserted into the chain.

%
%
%
%
%
%
%
  
  \subsection{State of the Art}
Specifically to tackle the problem of arbitrary harmful data insertions in the blockchain, the notion of redacting the contents of a blockchain was first proposed by Ateniese et al.~\cite{ateniese2017redactable}. The authors propose a solution more focused on the permissioned blockchain setting\footnote{The permissioned blockchain setting is when there is a trusted third party (TTP) that deliberates on the users' entry into the system.} based on chameleon hashes~\cite{camenisch2017chameleon}. In their protocol, a chameleon hash function replaces the regular SHA256 hash function when linking consecutive blocks in the chain. When a block is modified, a collision for the chameleon hash function can be efficiently computed (with the knowledge of the chameleon trapdoor key) for that block, keeping the state of the chain consistent after arbitrary modifications. 

In a permissioned setting where the control of the chain is shared among a few semi-trusted parties, the solution from~\cite{ateniese2017redactable} is elegant and works nicely, being even commercially adopted by a large consultancy company~\cite{nytimes, businessinsider, ateniese2018rewritable}. However, in permissionless blockchains such as Bitcoin, where the influx of users joining and leaving the system is ever changing and without any regulation, their protocol  clearly falls short in this scenario, as their techniques of secret sharing the chameleon trapdoor key and running a MPC protocol to compute a collision for the chameleon hash function do not scale to the thousands of users in the Bitcoin network. 
  Moreover, when a block is removed in their protocol it is completely unnoticeable to the users, leaving no trace of the old state. Although this could make sense in a permissioned setting, in a permissionless setting one would like to have some public accountability as to when and where a redaction has occurred. 
  


  Later, Puddu et al.~\cite{puddu2017muchain} proposed a blockchain protocol where the sender of a transaction can encrypt alternate versions of the transaction data, known as ``mutations"; the only unencrypted version of the transaction is considered to be the active transaction. The decryption keys are secret shared among the miners, and the sender of a transaction establishes a mutation policy for his transaction, that details how (and by whom) his transaction is allowed to be mutated. On receiving a mutate request, the miners run a MPC protocol to reconstruct the decryption key and decrypt the appropriate version of the transaction. The miners then publish this new version as the active transaction.
   In case of permissionless blockchains, they propose the usage of voting for gauging approval based on computational power. However, in a permissionless setting a malicious user can simply not include a mutation for his transaction, or even set a mutation policy where only he himself is able to mutate the transaction. Moreover, to tackle transaction consistency, where a mutated transaction affects other transactions in the chain, they propose to mutate all affected transactions through a cascading effect. This however, completely breaks the notion of transaction stability, e.g.,\ a payment made in the past to a user could be altered as a result of this cascading mutation. The proposal of~\cite{puddu2017muchain} also suffers from scalability issues due to the MPC protocol used for reconstructing decryption keys across different users.
   
   It is clear that for a permissionless blockchain without centralised trust assumptions, a practical solution for redacting harmful content must refrain from employing large-scale MPC protocols that hinders the performance of the blockchain. It also must accommodate public verifiability and accountability such that rational miners are incentivised to follow the protocol.

\subsection{Our Contributions}

\paragraph{Editable Blockchain Protocol}
  	We propose the first editable blockchain protocol for permissionless systems in~\cref{sec:editing}, which is completely decentralised and does not rely on heavy cryptographic primitives or additional trust assumptions. This makes our protocol easily integrable in systems like Bitcoin (as described in~\cref{sec:instantiation}). The edit operations can be proposed by any user and they are voted in the blockchain through consensus; the edits are only performed if approved by the blockchain policy (e.g.,\ voted by the majority). The protocol is based on a PoW consensus, however, it can be easily adapted to any consensus mechanism, since the core ideas are inherently independent of the type of consensus used. Our protocol also offers accountability for edit operations, where any edit in the chain can be publicly verified. 
  	
\paragraph{Formal Analysis} We build our protocol on firm theoretical grounds, as we formalise all the necessary properties of an editable blockchain in~\cref{sec:security}, and later show that our generic protocol of~\cref{sec:protocol} satisfies these properties. We borrow the fundamental properties of a secure blockchain protocol from~\cite{EC:GarKiaLeo15} and adapt them to our setting. 


\paragraph{Implementation} We demonstrate the practicality of our protocol with a proof-of-concept implementation in Python. We first show in~\cref{sec:implementation} that adding our redaction mechanism incurs in just a small overhead for chain validation time compared to that of the immutable protocol. Then, we show that for our protocol the overhead incurred for different numbers of redactions in the chain against a redactable chain with no redactions is minimal (less than $3\%$ for $5,000$ redactions on a $50,000$ blocks chain). Finally, we analyse the effect of the parameters in our protocol by measuring the overhead introduced by different choices of the system parameters when validating chains with redactions.

\subsection{Our Protocol}
  
Our protocol extends the immutable blockchain of Garay et al.~\cite{EC:GarKiaLeo15} to accommodate for edit operations in the following way: We extend the block structure to accommodate another copy of the transaction's Merkle root, that we denote by \emph{old state}. We also consider an editing policy for the chain, that determines the constraints and requirements for approving edit operations. To edit a block in the chain, our protocol (\cref{fig:abstract_protocol}) executes the following steps:
  	\begin{enumerate}
  		\item[a)] A user first proposes an edit request to the system. The request consists of the index of the block he wants to edit, and a candidate block to replace it.
  		
  		\item[b)] When miners in the network receives an edit request, they first validate the candidate block using its \emph{old state} information and verifying the following conditions: (1) it contains the correct information about the previous block, (2) it has solved the proof of work and (3) it does not invalidate the next block in the chain. If the candidate block is valid, miners can vote for it during the request's voting period by simply including the hash of the request in the next block they mine.  The collision resistance property of the hash function ensures that a vote for an edit request cannot be considered as a vote for any other edit request. 
  		
  		\item[c)] After the voting period for a request is over, everyone in the network can verify if the edit request was approved in accordance to the policy (e.g.,\ by checking the number of votes it received). If the request was approved, then the edit operation is performed by replacing the original block with the candidate block.
  	\end{enumerate}
  	
 To validate an edited chain, the miners validate each block exactly like in the immutable protocol; if a ``broken" link is found between blocks, the miner checks if the link still holds for the \emph{old state} information\footnote{A similar technique is used in~\cite{ateniese2017multiple}  to ``scar'' a block that was previously redacted.}. In the affirmative case, the miner ensures that the edited block has gathered enough votes and is approved, according to the policy of the chain. 
 
 The process of a redaction in our generic protocol as described in~\cref{fig:protocol} is pictorially presented in~\cref{fig:abstract_protocol}.

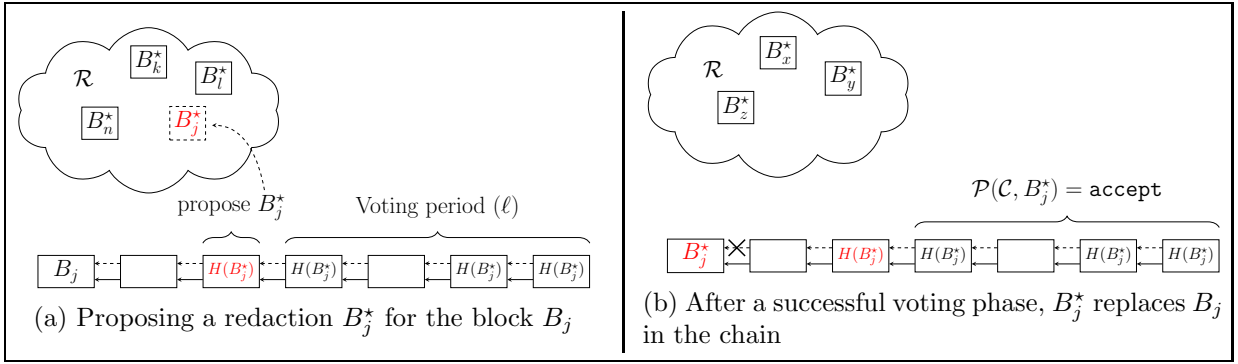
\begin{figure*}[t]
\begin{minipage}{0.48\linewidth}
	\centering
	\resizebox{\linewidth}{!}{
	\begin{tikzpicture}[every node/.style={font=\tiny}]
		
	    \node (n1) []  {};
	    
		\node (b1) [block, below of = n1,minimum width=4.5em] {\huge$B_j$};
		\node (b2) [block, right of = b1, xshift=4em, minimum width=4.5em] {};
		\node (b3) [block, right of = b2, xshift=4em,minimum width=4.5em] {\Large \textcolor{red}{$H(\candidateblk_j)$}};
		\node (b4) [block, right of = b3, xshift=4em,minimum width=4.5em] {\Large $H(\candidateblk_j)$};
		\node (b5) [block, right of = b4, xshift=4em,minimum width=4.5em] {};
		\node (b6) [block, right of = b5, xshift=4em,minimum width=4.5em] {\Large $H(\candidateblk_j)$};
		\node (b7) [block, right of = b6, xshift=4em,minimum width=4.5em] {\Large $H(\candidateblk_j)$};

		\draw [arrow,dashed] ([yshift=0.25cm]b2.west) -- ([yshift=0.25cm]b1.east);
		
		\draw [arrow] ([yshift=-0.25cm]b2.west) -- ([yshift=-0.25cm]b1.east);
		\draw [arrow,dashed] ([yshift=0.25cm]b3.west) -- ([yshift=0.25cm]b2.east);
		
		\draw [arrow] ([yshift=-0.25cm]b3.west) -- ([yshift=-0.25cm]b2.east);
		
		\draw [arrow,dashed] ([yshift=0.25cm]b4.west) -- ([yshift=0.25cm]b3.east);
		\draw [arrow] ([yshift=-0.25cm]b4.west) -- ([yshift=-0.25cm]b3.east);
			\draw [arrow,dashed] ([yshift=0.25cm]b5.west) -- ([yshift=0.25cm]b4.east);
		
		\draw [arrow] ([yshift=-0.25cm]b5.west) -- ([yshift=-0.25cm]b4.east);
		\draw [arrow,dashed] ([yshift=0.25cm]b6.west) -- ([yshift=0.25cm]b5.east);
		\draw [arrow] ([yshift=-0.25cm]b6.west) -- ([yshift=-0.25cm]b5.east);
		\draw [arrow,dashed] ([yshift=0.25cm]b7.west) -- ([yshift=0.25cm]b6.east);
		\draw [arrow] ([yshift=-0.25cm]b7.west) -- ([yshift=-0.25cm]b6.east);

		\node (nt1) at ($(b4.west)!0.5!(b7.east)$) {};

\draw [decorate,decoration={brace,amplitude=10pt,raise=25pt},xshift=0pt,yshift= 4cm]
(b4.west) -- (b7.east) node [black,midway,yshift=2cm] 
{\huge Voting period ($\ell$)};

\draw [decorate,decoration={brace,amplitude=10pt,raise=25pt},yshift= 4cm]
(b3.west) -- (b3.east) node (propose) [black,midway,yshift=2cm] 
{\huge propose $\candidateblk_j$};


\node [cloud, above of =b2, yshift = 4cm,draw,cloud puffs=10,cloud puff arc=120, aspect=2, inner ysep=3.6em](cl) { };
\node (intro4) at ($(cl.east)$){};

\node (cl1) [block, above of =cl, yshift =0.5cm]{\huge $B^\star_k$};
\node (cl2) [block, right of =cl1,xshift = 1cm,yshift =-0.5cm]{\huge $B^\star_l$};
\node (cl3) [left of =cl1,xshift = -1cm,yshift =-0.5cm]{\huge $\rcbpool$};
\node (cl4) [block, left of =cl,xshift = -0.5cm,yshift =-0.4cm]{\huge $B^\star_n$};
\node (cl5) [block, dashed, right of =cl,xshift = 0.2cm,yshift =-0.4cm]{\huge \textcolor{red}{$\candidateblk_j$}};

\draw [arrow,dashed] ([xshift=-2em]propose.north east) to[out=100,in=0] ([yshift=1.2em,xshift=0.5em]cl5.south east);

	\end{tikzpicture}
	}
\subcaption{Proposing a redaction $\candidateblk_j$ for the block $B_j$}
\label{fig:abstract_before_redaction}
\end{minipage}
\hfill\vline\hfill
\begin{minipage}{0.48\linewidth}
	\centering
	\resizebox{\linewidth}{!}{
	\begin{tikzpicture}[every node/.style={font=\tiny}]
		
	    \node (n1) []  {};
	    
		\node (b1) [block, below of = n1,minimum width=4.5em] {\huge \textcolor{red}{$B^\star_j$}};
		\node (b2) [block, right of = b1, xshift=4em,minimum width=4.5em] {};
		\node (b3) [block, right of = b2, xshift=4em, minimum width=4.5em] {\Large \textcolor{red}{$H(\candidateblk_j)$}};
		\node (b4) [block, right of = b3, xshift=4em,minimum width=4.5em] {\Large $H(\candidateblk_j)$};
		\node (b5) [block, right of = b4, xshift=4em,minimum width=4.5em] {};
		\node (b6) [block, right of = b5, xshift=4em,minimum width=4.5em] {\Large $H(\candidateblk_j)$};
		\node (b7) [block, right of = b6, xshift=4em,minimum width=4.5em] {\Large $H(\candidateblk_j)$};
		
		\node (cross) [cross out, inner sep = 0.6em,draw=black,yshift=0.25cm,line width = 1.3pt] at ($(b1.east)!0.5!(b2.west)$){};
		\draw [arrow,dashed] ([yshift=0.25cm]b2.west) -- ([yshift=0.25cm]b1.east);
		
		\draw [arrow] ([yshift=-0.25cm]b2.west) -- ([yshift=-0.25cm]b1.east);
		\draw [arrow,dashed] ([yshift=0.25cm]b3.west) -- ([yshift=0.25cm]b2.east);
		
		\draw [arrow] ([yshift=-0.25cm]b3.west) -- ([yshift=-0.25cm]b2.east);
		
		\draw [arrow,dashed] ([yshift=0.25cm]b4.west) -- ([yshift=0.25cm]b3.east);
		\draw [arrow] ([yshift=-0.25cm]b4.west) -- ([yshift=-0.25cm]b3.east);
			\draw [arrow,dashed] ([yshift=0.25cm]b5.west) -- ([yshift=0.25cm]b4.east);
		
		\draw [arrow] ([yshift=-0.25cm]b5.west) -- ([yshift=-0.25cm]b4.east);
		\draw [arrow,dashed] ([yshift=0.25cm]b6.west) -- ([yshift=0.25cm]b5.east);
		\draw [arrow] ([yshift=-0.25cm]b6.west) -- ([yshift=-0.25cm]b5.east);
		\draw [arrow,dashed] ([yshift=0.25cm]b7.west) -- ([yshift=0.25cm]b6.east);
		\draw [arrow] ([yshift=-0.25cm]b7.west) -- ([yshift=-0.25cm]b6.east);

		\node (nt1) at ($(b4.west)!0.5!(b7.east)$) {};

\draw [decorate,decoration={brace,amplitude=10pt,raise=25pt},xshift=0pt,yshift= 4cm]
(b4.west) -- (b7.east) node [black,midway,yshift=2cm] 
{\huge $\policy(\chain, B^\star_j) = \accept$};

\node [cloud, above of =b2, yshift = 4cm,draw,cloud puffs=10,cloud puff arc=120, aspect=2, inner ysep=3.6em](cl) { };
\node (cl1) [block, above of =cl, yshift =0.3cm]{\huge $B^\star_x$};
\node (cl2) [block, right of =cl1,xshift = 1cm,yshift =-0.8cm]{\huge $B^\star_y$};
\node (cl3) [left of =cl1,xshift = -1cm,yshift =-0.5cm]{\huge $\rcbpool$};
\node (cl4) [block, left of =cl,xshift = -0.3cm,yshift =-0.4cm]{\huge $B^\star_z$};

	\end{tikzpicture}
	}
\subcaption{After a successful voting phase, $B^\star_j$ replaces $B_j$ in the chain}
\label{fig:abstract_after_redaction}
\end{minipage}

\caption{The candidate block pool $\rcbpool$ stores the candidate blocks that are proposed and that can be endorsed in the voting phase. A block is linked to its predecessor by two links, the old link (solid arrow) and the new link (dashed arrow). In (a), a redact request $B^\star_j$ is proposed as a redaction for $B_j$ and added to $\rcbpool$, then the hash of $B^\star_j$ is included in the chain to denote a new candidate redaction; its voting phase starts just after its proposal. In (b), the candidate block $B^\star_j$ has gathered enough votes and was approved by the redaction policy $\policy$ of the chain; $B^\star_j$ replaces $B_j$ and the redacted chain is propagated. Note that new link from the block to the right of $B^\star_j$ is broken (marked by a cross), however the old link to $B^\star_j$ still holds. For simplicity, we consider the parameters $k=0$ (persistence), $\ell=4$ (voting period) and $\rho\ge 3/4$ (threshold for policy approval).}
\label{fig:abstract_protocol}
\end{figure*}


  

\section{Preliminaries}

Throughout this work we denote by $\spar \in \NN$ the security parameter and by $a \gets \adv (\mathsf{in})$ the output of an algorithm $\adv$ on input $\mathsf{in}$. We also use the terms ``redact" and ``edit" interchangeably in this paper.
\subsection{Blockchain Basics}\label{sec:block-abstract}

We make use of the notation of~\cite{EC:GarKiaLeo15} to describe a blockchain. A
block is a triple of the form $B := \block$, where $s \in \bit^\spar$,
$x \in \bit^*$ and $\ctr \in \NN$. Here $s$ is the state of the previous block, $x$ is the data and $\ctr$ is the proof of work of the block. A block $B$ is \emph{valid} iff
\[
\blockValid^D(B):=H(\ctr,G(s,x)) < D.
\]

Here, $H:\bit^*\rightarrow\bit^\spar$ and $G:\bit^*\rightarrow\bit^\spar$ are
cryptographic hash functions, and the parameter $D \in \NN$ is the block's difficulty level.

The blockchain is simply a chain (or sequence) of blocks, that we call
$\chain$. The rightmost block is called the head of the chain, denoted by
$\head{\chain}$. Any chain $\chain$ with a head $\head{\chain} := \block$ can be
extended to a new longer chain $\chain':=\chain||B'$ by attaching a (valid)
block $B' := \blockprime$ such that $s' = H(\ctr,G(s,x))$; the head of the new
chain $\chain'$ is $\head{\chain'} := B'$. A chain $\chain$ can also be empty,
and in such a case we let $\chain := \varepsilon$. The function
$\length{\chain}$ denotes the length  of a chain $\chain$ (i.e.,\ its number of blocks). For
a chain $\chain$ of length $n$ and any $q \ge 0$, we denote by
$\prune{\chain}{q}$ the chain resulting from removing the $q$ rightmost blocks
of $\chain$, and analogously we denote by $\pruneback{\chain}{q}$ the chain
resulting in removing the $q$ \emph{leftmost} blocks of $\chain$; note that if $q \ge n$ (where $\length{\chain} = n$)
then $\prune{\chain}{q} :=\varepsilon$ and $\pruneback{\chain}{q}:=\varepsilon$. If
$\chain$ is a prefix of $\chain'$ we write $\chain \prec \chain'$. We also note
that the difficulty level $D$ can be different among blocks in a chain.
%
%

\subsection{Properties of a Secure Blockchain}\label{sec:blockchain_properties}
In this section we detail the relevant aspects of the underlying blockchain system that is required for our protocol.

 We consider time to be divided into standard discrete units, such as minutes. A well defined continuous amount of these units is called a \emph{slot}. Each slot $\slot_l$ is indexed for $l \in \{1,2,3, \ldots \}$. We assume that users have a synchronised clock that indicates the current time down to the smallest discrete unit. The users execute a distributed protocol to generate a new block in each slot, where a block contains some data. We assume the slots' real time window properties as in~\cite{kiayias2017ouroboros}. In~\cite{EC:GarKiaLeo15,EC:PassSS17,kiayias2017ouroboros} it is shown that a ``healthy'' blockchain must satisfy the properties of \emph{persistence} and \emph{liveness}, which intuitively guarantee that after some time period, all honest users of the system will have a consistent view of the chain, and transactions posted by honest users will eventually be included. We informally discuss the two properties next.

	\medskip

	\noindent\emph{Persistence}: Once a user in the system announces a particular transaction as \emph{stable}, all of the remaining users when queried will either report the transaction in the same position in the ledger or will not report any other conflicting transaction as stable. A system parameter $k$ determines the number of blocks that stabilise a transaction. That is, a transaction is stable if the block containing it has at least $k$ blocks following it in the blockchain. We only consider a transaction to be in the chain after it becomes stable.

	\medskip
		
	\noindent\emph{Liveness}: If all the honest users in the system attempt to include a certain transaction into their ledger, then after the passing of time corresponding to $u$ slots which represents the \emph{transaction confirmation time}, all users, when queried and responding honestly, will report the transaction as being stable.

	\medskip
	
Throughout the paper we refer to the user as both a user and a miner interchangeably.

\subsection{Execution Model.}
In the following we define the notation for our protocol executions. Our definitions follow along the same lines of~\cite{pass2017fruitchains}.

A protocol refers to an algorithm for a set of interactive Turing Machines (also called nodes) to interact with each other. The execution of a protocol $\Pi$ that is directed by an environment/outer game $\env(1^\spar)$, which activates a number of parties $\userSet = \{p_1, \ldots, p_n \}$ as either honest or corrupted parties. Honest parties would faithfully follow the protocol's prescription, whereas corrupt parties are controlled by an adversary $\adv$, which reads all their inputs/messages and sets their outputs/messages to be sent.

\begin{itemize}
	\item A protocol's execution proceeds in rounds that model atomic time steps. At the beginning of every round, honest parties receive inputs from an environment $\env$; at the end of every round, honest parties send outputs to the environment $\env$.

	\item $\adv$ is responsible for delivering all messages sent by parties (honest or corrupted) to all other parties. $\adv$ cannot modify the content of messages broadcast by honest parties.
	
	\item At any point $\env$ can corrupt an honest party $j$, which means that $\adv$ gets access to its local state and subsequently controls party $j$. 
	
	\item At any point of the execution, $\env$ can uncorrupt a corrupted party $j$, which means that $\adv$ no longer controls $j$. A party that becomes uncorrupt is treated in the same way as a newly spawning party, i.e., the party's internal state is re-initialised and then the party starts executing the honest protocol no longer controlled by $\adv$.
\end{itemize}
Note that a protocol execution can be randomised, where the randomness comes from honest parties as well as from $\adv$ and $\env$. We denote by $\view \gets \mathsf{EXEC}^\Pi(\adv, \env, \spar)$ the randomly sampled execution trace. More formally, $\view$ denotes the joint view of all parties (i.e., all their inputs, random coins and messages received, including those from the random oracle) in the above execution; note that this joint view fully determines the execution. 

\section{Editing the Blockchain}\label{sec:editing}
In this section we introduce an abstraction $\Blk$ of a blockchain protocol, and we describe how to extend $\Blk$ into an \emph{editable} blockchain protocol $\Blk'$.

\subsection{Blockchain Protocol}\label{sec:abstract-new}

We consider an immutable blockchain protocol (for instance~\cite{EC:GarKiaLeo15}), denoted by $\Blk$, where nodes receive inputs from the environment $\env$, and interact among each other to agree on an ordered ledger that achieves persistence and liveness. The blockchain protocol $\Blk$ is characterised by a set of global parameters and by a public set of rules for validation. 
The protocol $\Blk$ provides the nodes with the following set of interfaces which are assumed to have complete access to the network and its users.

\begin{itemize}
\item $\{\chain',\bot\} \gets \Blk.\getChain$: returns a longer and \emph{valid} chain $\chain$ in the network (if it exists), otherwise returns $\bot$.

\item $\{0,1\} \gets \Blk.\chainValid(\chain)$: The chain validity check takes as input a chain $\chain$ and returns $1$ iff the chain is valid according to a public set of rules. 

\item $\{0,1\} \gets \Blk.\blockValid(B)$: The block validity check takes as input a block $B$ and returns $1$ iff the block is valid according to a public set of rules. 



\item $\Blk.\broadcast(\data)$: takes as input some data $\data$ and broadcasts it to all the nodes of the system.
	
\end{itemize}

\noindent The nodes in the $\Blk$ protocol have their own local chain $\chain$ which is initialised with a common genesis block. The consensus in $\Blk$ guarantees the properties of persistence and liveness discussed in~\cref{sec:blockchain_properties}.

\subsection{Editable Blockchain}

We build our editable blockchain protocol $\Blk'$ by modifying and extending the aforementioned protocol $\Blk$. The protocol $\Blk'$ has copies of all the basic blockchain functionalities exposed by $\Blk$ through the interfaces described above, and modifies the $\chainValid$ and $\blockValid$ algorithms in order to accommodate for edits in $\chain$. In addition, the protocol $\Blk'$ provides the following interfaces:
 \begin{itemize}
	\item $\candidateblk_j \leftarrow \Blk'.\redactreq(\chain,j,\data^\star)$: takes as input the chain $\chain$, an index $j$ of a block to edit and some data $\data^\star$. It then returns a candidate block for $B_j$.
	\item $\bit \leftarrow \Blk'.\decide(\candidateblk_j, \chain)$: takes as input a candidate block $\candidateblk_j$ and the chain $\chain$ and returns $1$ iff the candidate block $\candidateblk_j$ is valid.
 \end{itemize}

The modified chain validation and block validation algorithms are presented in~\cref{alg:validate} and~\cref{alg:validateblk}, respectively, while the new algorithms to propose an edit to a  block and to validate candidate blocks are presented in~\cref{alg:request} and~\cref{alg:candidate}, respectively. In \cref{fig:protocol} we formally describe the protocol $\Blk'$.

Intuitively, we need modifications for chain validation and block validation algorithms to account for an edited block in the chain. A block that has been edited possesses a different state, that does not immediately correlate with its neighbouring blocks. Therefore, for such an edited block we need to ensure that the old state of the block (the state before the edit) is still accessible for verification.\footnote{Note that the protocol does \emph{not} need to maintain the redacted data for verification, and therefore all redacted data is completely removed from the chain.} We do this by storing the old state information in the block itself. This therefore requires a modified block validation algorithm and a modified chain validation algorithm overall.

We note that for simplicity our protocol is restricted to perform a single edit operation per block throughout the run of the protocol. In~\cref{apx:extension} we describe an extension of the protocol to accommodate for an arbitrary number of redactions per block.

\paragraph{Blockchain Policy}
We introduce the notion of a blockchain policy $\policy$, that determines if an edit to the chain $\chain$ should be approved or not. The protocol $\Blk'$ is parameterised by a policy $\policy$ that is a function that takes as input a chain $\chain$ and a candidate block $\candidateblk$ (that proposes a modification to the chain $\chain$) and it returns $\accept$ if the candidate block $\candidateblk$ complies with the policy $\policy$, otherwise it outputs $\reject$; in case the modification proposed by $\candidateblk$ is still being deliberated in the chain $\chain$, then $\policy$ returns $\voting$. 

In its most basic form, a policy $\policy$ requires that a candidate block $\candidateblk$ should only be accepted if $\candidateblk$ was voted by the majority of the network within some predefined interval of blocks (or \emph{voting period} $\ell$). A formal definition follows.

\begin{definition}[Policy]\label{def:policy}
A candidate block $\editreq$ generated in round $r$ is said to satisfy the policy $\policy$ of chain $\chain := (B_1,\ldots,B_n)$, i.e., $\policy(\chain, \editreq) = \accept$, if it holds that $B_{r+\ell} \in \prune{\chain}{k}$ and the ratio of blocks between $B_r$ and $B_{r+\ell}$ containing $H(\editreq)$ (a vote for $\editreq$) is at least $\rho$, for $k,\ell \in \NN$, and $0 < \rho \le 1$, where $k$ is the persistence parameter, $\ell$ is the voting period, and $\rho$ is the ratio of votes necessary within the voting period $\ell$.
\end{definition}

\restylefloat{figure}
\floatstyle{boxed}

\begin{figure*}
The protocol $\Blk'$ consists of a sequence of rounds $\round$, and is parameterised by the liveness and persistence parameters, denoted by $u,k$, respectively, and by a policy $\policy$ that among other rules and constraints, determines the parameter $\ell$ (that is the duration of the voting period) and $\rho$ (that is the threshold of votes within the period $\ell$ for a candidate block to be accepted and incorporated into the chain). A pictorial representation of the protocol can be found in~\cref{fig:abstract_protocol}.
\begin{flushleft}
\textbf{Initialisation.} Set the chain $\chain \leftarrow \genesis$, set round $\round \leftarrow 1$ and initialise an empty list of candidate blocks for edits $\rcbpool := \emptyset$.
\end{flushleft}
For each round $\round$ of the protocol, we describe the following sequence of execution.
\begin{flushleft}
\textbf{Chain update.} At the beginning of a new round $\round$, the nodes try to update their local chain by calling $\chain \leftarrow \Blk'.\getChain$. 	
\end{flushleft}

\begin{flushleft}
\textbf{Candidate blocks pool.} Collect all candidate blocks $\candidateblk_j$ from the network and add $\candidateblk_j$ to the pool of candidate blocks $\rcbpool$ iff $\Blk'.\decide(\chain,\candidateblk_j) = 1$; otherwise discard $\candidateblk_j$.
\end{flushleft}

\begin{flushleft}		
\textbf{Editing the chain.} For all candidate blocks $\candidateblk_j \in \rcbpool$ do:
	\begin{itemize}
		\item If $\policy(\chain,\candidateblk_j) = \accept$, then build the new chain as $\chain \leftarrow \prune{\chain}{(n-j+1)}||\candidateblk_j||\pruneback{\chain}{j}$ and remove $\candidateblk_j$ from $\rcbpool$. For policy $\policy$ to accept $\candidateblk_j$, it must be the case that the ratio of votes for $\candidateblk_j$ within its voting period ($\ell$ blocks) is at least $\rho$.
		\item If $\policy(\chain,\candidateblk_j) = \reject$, then remove $\candidateblk_j$ from $\rcbpool$. For policy $\policy$ to reject $\candidateblk_j$ it must be the case that the ratio of votes for $\candidateblk_j$ within its voting period ($\ell$ blocks) is less than $\rho$.
		\item If $\policy(\chain,\candidateblk_j) = \voting$, then do nothing. 
	\end{itemize}
\end{flushleft}
	

\begin{flushleft}
\textbf{Creating a new block.} Collects all the transaction data $x$ from the network for the $\round$-th round and tries to build a new block $B_\round$ by performing the following steps: 
			\begin{itemize}
 				\item \emph{(Voting for candidate blocks).} For all candidate blocks $\candidateblk_j \in \rcbpool$ that the node is willing to endorse, if $\policy(\chain,\candidateblk_j) = \voting$ then set $x \leftarrow x || H(\candidateblk_j)$.
				\item Create a new block $B := \langle s, x, \ctr, G(s,x)\rangle$, such that $s = H(\ctr',G(s',x'),y')$, for $\langle s',x',\ctr',y'\rangle \leftarrow \head{\chain}$.
				\item Extend its local chain $\chain \leftarrow \chain \vert\vert B$ and iff $\Blk'.\chainValid(\chain) = 1$ then broadcast $\chain$ to the network.
			\end{itemize}
\end{flushleft}

\begin{flushleft}
\textbf{Propose an edit.} The node willing to propose an edit for the block $B_j$, for $j \in [n]$, creates a candidate block $\candidateblk_j \leftarrow \Blk'.\redactreq(\chain,j, \data^\star)$ using the new data $\data^\star$, and broadcasts it to the network by calling $\Blk'.\broadcast(\candidateblk_j)$. 
\end{flushleft}

\caption{Accountable permissionless editable blockchain protocol $\Blk'_\policy$}
\label{fig:protocol}
\end{figure*}

\subsection{Protocol Description}\label{sec:protocol}
We denote a block to be of the form $B := \blockmod$, where $s \in \bit^\spar$ is the hash of the previous block, $x \in \bit^*$ is the block data, and $y \in \{ 0,1\}^\spar$ is the old state of the block data. 
To extend an editable chain $\chain$ to a new longer chain $\chain' := \chain||B'$, the newly created block $B' := \langle s',x',\ctr', y'\rangle$ sets $s' := H(\ctr,G(s,x),y)$, where $\head\chain := \blockmod$. Note that upon the creation of block $B'$, the component $y'$ takes the value $G(s',x')$, that represents the initial state of block $B'$. 

During the setup of the system, the chain $\chain$ is initialised as $\chain:=\genesis$, and all the users in the system maintain a local copy of the chain $\chain$ and a pool $\rcbpool$ consisting candidate blocks for edits, that is initially empty. The protocol runs in a sequence of rounds $\round$ (starting with $\round := 1$).

%

In the beginning of each round $\round$, the users try to extend their local chain using the interface $\Blk'.\getChain$, that tries to retrieve new valid blocks from the network and append them to the local chain. Next, the users collect all the candidate blocks $\candidateblk_j$ from the network and validate them by using $\Blk'.\decide$ (\cref{alg:candidate}); then, the users add all the valid candidate blocks to the pool $\rcbpool$. 
For each candidate block $\candidateblk_j$ in $\rcbpool$, the users compute $\policy(\chain,\candidateblk_j)$ to verify if the candidate block $\candidateblk_j$ should be adopted by the chain or not; if the output is $\accept$ they replace the original block $B_j$ in the chain by the candidate block $\candidateblk_j$ and remove $\candidateblk_j$ from $\rcbpool$. If the output is $\reject$, the users remove the candidate block $\candidateblk_j$ from $\rcbpool$, otherwise if the output is $\voting$ they do nothing. 
To create a new block $B$ the users collect transactions from the network and store them in $x$; if a user wishes to endorse the edit proposed by a candidate block $\candidateblk_j \in \rcbpool$ that is still in $\voting$ stage, the user can vote for the candidate block $\candidateblk_j$ by simply adding $H(\candidateblk_j)$ to the data $x$. After the block is created and the new extended chain $\chain':= \chain||B$ is built, the users broadcast the new chain $\chain'$ iff $\Blk'.\chainValid(\chain') = 1$ (\cref{alg:validate}). 
Finally, if a user wishes to propose an edit to block $B_j$ in the chain $\chain$, she first creates the new data $x^\star_j$, that represents the modifications that she proposes to make to the data $x_j$, and calls $\redactreq$ (\cref{alg:request}) using the interface $\Blk'.\redactreq$ with the chain $\chain$, index $j$ of the block in $\chain$ and the new data $x^\star_j$. The algorithm returns a candidate block $\candidateblk_j$ that is broadcasted to the network.


\paragraph{Chain Validation} Given a chain $\chain$, the user needs to validate $\chain$ according to some set of validation rules. To do this, she uses the $\Blk'.\chainValid$ interface, that is implemented by~\cref{alg:validate}. The algorithm takes as input a chain $\chain$ and starts validating from the head of $\chain$. In~\Cref{line:block_valid}, the validity of the block $B_j$ is checked. If the assertion in~\cref{line:new_state} is false and if the check in~\cref{line:old_state} is successful, then the block $B_{j-1}$ is a valid edited block. In~\cref{line:old_state}, the validity of $B_{j-1}$ is checked in the context of a candidate block and whether the block is accepted according to the voting policy $\policy$ of the chain.  

\begin{algorithm}
\SetAlgoVlined
\SetKwInOut{Input}{input}\SetKwInOut{Output}{output}
\Input{Chain $\chain = (B_1,\cdots,B_n)$ of length $n$.}
\Output{$\bit$}

\BlankLine
$j:=n$\;
\lIf{$j = 1$}{\Return $\Blk'.\blockValid(B_1)$}
\While{$j \ge 2$}{ 
$B_j:= \langle s_j,x_j,\ctr_j,y_j\rangle$ \Comment*{\scriptsize{$B_j := \head{\chain}$ when $j=n$}}\
\lIf{$\Blk'.\blockValid(B_j) = 0$}{\Return $0$}\label{line:block_valid} 
\lIf{$s_j = H(\ctr_{j-1}, G(s_{j-1},x_{j-1}),y_{j-1})$}{$j := j-1$}\label{line:new_state}
\lElseIf{$(s_j = H(\ctr_{j-1}, y_{j-1},y_{j-1})) \land (\Blk'.\decide(\chain,B_{j-1}) = 1) \land (\policy(\chain,B_{j-1}) =\accept)$}{$j := j-1$}\label{line:old_state}
\lElse{\Return $0$}
}
\Return $1$\;
\caption{$\chainValid$ \footnotesize{(implements $\Blk'.\chainValid$)}}	
\label{alg:validate}
\end{algorithm}


\paragraph{Block Validation} To validate a block, the $\blockValid$ algorithm (described in~\cref{alg:validateblk}) takes as input a block $B$ and first validates the data included in the block according to some pre-defined validation predicate. It then checks if the block indeed satisfies the constraints of the PoW puzzle. Apart from this check, the or ($\lor$) condition is to ensure that in case of dealing with an edited block $B$, the old state of $B$ still satisfies the PoW constraints.    

\begin{algorithm}
\SetAlgoVlined
\SetKwInOut{Input}{input}\SetKwInOut{Output}{output}
\Input{Block $B:=\blockmod$.}
\Output{$\bit$}

\BlankLine
Validate data $x$, if \emph{invalid} \Return $0$\;
\lIf{$H(\ctr,G(s,x),y) < D \lor H(\ctr, y,y) < D$}{\Return $1$}\label{line:pow}

\lElse{\Return $0$}\label{line:ret}
\caption{$\blockValid$ \footnotesize{(implements $\Blk'.\blockValid$)}}	
\label{alg:validateblk}
\end{algorithm} 


\paragraph{Proposing an Edit} Any user in the network can propose for a particular data to be removed or replaced from the blockchain. She uses the $\redactreq$ algorithm as described in~\cref{alg:request} and constructs a candidate block to replace the original block. The algorithm takes as input a chain $\chain$, the index $j$ of the original block and new data $x_j^\star$ that will replace the original data. If the user's intention is simply to remove all data from block $B_j$ then $x_j^\star := \varepsilon$. It then generates a candidate block as the tuple $B_j^\star := \langle s_j,x_j^\star,\ctr_j,y_j\rangle$.

\begin{algorithm}
  \SetKwInOut{Input}{input}\SetKwInOut{Output}{output}
   \Input{Chain $\chain = (B_1,\cdots,B_n)$ of length $n$, an index $j \in [n]$, and the new data $x_j^\star$.}
  \Output{A candidate block $\candidateblk_j$.}
  \BlankLine 
    Parse $B_j:= \blockmodi{j}$\;
    Build the candidate block $\candidateblk_j := \langle s_j,x_j^\star,\ctr_j,y_j\rangle$\;
    \Return $\candidateblk_j$\;
 \caption{$\redactreq$ \footnotesize{(implements $\Blk'.\redactreq$)}}
 \label{alg:request}
\end{algorithm}

\paragraph{Validating Candidate Blocks} When the user wishes to validate a candidate block $\candidateblk_j := \langle s_j,x^\star_j,\ctr_j,y_j\rangle$ for the $j$-th block of a chain $\chain$, she uses $\decide$ which is described in~\cref{alg:candidate}. It retrieves the blocks $B_{j-1}$ and $B_{j+1}$ of index $j-1$ and $j+1$ respectively from the chain $\chain$. In~\cref{line:links} it is checked if the link $s^\star_j$ from $\candidateblk_j$ to $B_{j-1}$ holds and that the link $s_{j+1}$ from $B_{j+1}$ to $B_{j}^\star$ also satisfies the condition $s_{j+1} = H(ctr_j,y_j,y_j)$. The latter condition checks if the ``old link" still holds. If both checks are successful the candidate block $\candidateblk_j$ is considered valid, otherwise it is considered invalid. 

\begin{algorithm}
	\SetKwInOut{Input}{input}\SetKwInOut{Output}{output}
\Input{Chain $\chain = (B_1,\cdots,B_n)$ of length $n$, and a  candidate block $\candidateblk_j$ for an edit.}
\Output{$\bit$}
\BlankLine
Parse $\candidateblk_j := \langle s_j,x^\star_j, \ctr_j, y_j \rangle$\;
\lIf{$\Blk'.\blockValid(\candidateblk_j) = 0$}{\Return $0$}
Parse $B_{j-1}:= \blockmodi{j-1}$\;
Parse $B_{j+1}:= \blockmodi{j+1}$\;
\lIf{$s^\star_j = H(\ctr_{j-1},y_{j-1},y_{j-1}) \land s_{j+1} = H(\ctr_j,y_j,y_j)$}{\Return $1$}\label{line:links}
\lElse{\Return $0$}

\caption{$\decide$ \footnotesize{(implements $\Blk'.\decide$)}}
\label{alg:candidate}
\end{algorithm}

\section{Security Analysis}\label{sec:security}
In this section we analyse the security of our editable blockchain protocol of \cref{fig:protocol}.

We assume the existence of an immutable blockchain protocol $\Blk$, as described in~\cref{sec:abstract-new}, that satisfies the properties of chain growth, chain quality and common prefix~\cite{EC:GarKiaLeo15}. The basic intuition behind our security analysis is that, given that $\Blk$ satisfies the aforementioned properties, our editable blockchain protocol $\Blk'_\policy$, (which is $\Blk'$ parameterised by a policy $\policy$), preserves the same properties (or a variation of the property in the case of common prefix). Therefore, our protocol behaves exactly like the immutable blockchain $\Blk$ when there are no edits in the chain, and if an edit operation was performed, it must have been approved by the policy $\policy$. We discuss each individual property next.

\paragraph{Chain Growth} The chain growth property from $\Blk$ is automatically preserved in our editable blockchain $\Blk'$, since the possible edits do not allow the removal of blocks or influence the growth of the chain. We present the formal definition next, followed by a theorem stating that $\Blk'$ preserves chain growth whenever $\Blk$ satisfies chain growth. 

\begin{definition}[Chain Growth~\cite{EC:GarKiaLeo15}]
Consider the chains $\chain_1,\chain_2$ possessed by two honest parties at the onset of two slots $\slot_1,\slot_2$, with $\slot_2$ at least $s$ slots ahead of $\slot_1$. Then it holds that $\length{\chain_2}-\length{\chain_1} \ge \tau \cdot s$, for $s \in \NN$ and $0<\tau \le 1$, where $\tau$ is the speed coefficient.
\end{definition}

\begin{theorem}
If $\Blk$ satisfies $(\tau,s)$-chain growth, then $\Blk'_\policy$ satisfies $(\tau,s)$-chain growth for any policy $\policy$.
\end{theorem}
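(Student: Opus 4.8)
The plan is to exploit the fact that passing from $\Blk$ to $\Blk'_\policy$ changes \emph{what} a chain looks like at a given index but never \emph{how many} blocks an honest party holds, nor the rule by which an honest party adopts a new chain. Chain growth is a lower bound on the rate at which honest local chains lengthen, so it suffices to show that in $\Blk'_\policy$ the length of every honest party's local chain, as a function of the slot, evolves exactly as it would in a run of $\Blk$ against a suitably related adversary; then the $(\tau,s)$-bound of $\Blk$ transfers verbatim. I would phrase this as a coupling: fix $\adv,\env$ and the trace $\view \gets \mathsf{EXEC}^{\Blk'_\policy}(\adv,\env,\spar)$, and track, for each honest party $p$ and each slot $\slot_l$, the quantity $\length{\chain}$ where $\chain$ is $p$'s local chain at the onset of $\slot_l$.

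The core of the argument is to enumerate the only three operations in \cref{fig:protocol} that touch $\chain$: (i) \textbf{Chain update}, which calls $\Blk'.\getChain$ and adopts a longer valid chain just as $\Blk.\getChain$ does; (ii) \textbf{Creating a new block}, which appends a single freshly mined block $B_\round$ under the same PoW difficulty $D$ as in $\Blk$ — the vote hashes $H(\candidateblk_j)$ being merely part of the payload $x$, exactly like ordinary transaction data; and (iii) \textbf{Editing the chain}, which on $\policy(\chain,\candidateblk_j)=\accept$ sets $\chain \leftarrow \prune{\chain}{(n-j+1)}\,||\,\candidateblk_j\,||\,\pruneback{\chain}{j}$, a pure in-place substitution of the $j$-th block that leaves $\length{\chain}=n$ fixed, and otherwise leaves $\chain$ alone. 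Thus no operation ever shortens a local chain, and the only operation that lengthens it behaves identically to the corresponding step of $\Blk$; consequently the honest mining rate, hence the rate at which honest blocks enter the system, is unchanged, and for two honest parties holding $\chain_1,\chain_2$ at the onsets of $\slot_1,\slot_2$ with $\slot_2$ at least $s$ slots ahead we inherit $\length{\chain_2}-\length{\chain_1}\ge\tau\cdot s$ from $\Blk$'s $(\tau,s)$-chain growth. This bound makes no reference to $\policy$, since $\policy$ only determines which block sits at a given index, never the count of blocks, so the conclusion holds for every policy.

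Two minor points need to be checked so the coupling is sound, and these are the only places the argument is not completely immediate. First, the validity predicates $\Blk'.\blockValid$ and $\Blk'.\chainValid$ of \cref{alg:validate} accept a \emph{superset} of what $\Blk$ accepts — the extra disjunct $H(\ctr,y,y)<D$ and the old-link branch in \cref{line:old_state} only add accepting cases — so enlarging the set of chains an honest party may adopt via $\Blk'.\getChain$ can never slow growth (a lower bound). Second, an accepted edit must not retroactively cause an honest party to \emph{reject} a chain it previously held: this holds because $\candidateblk_j$ has passed $\Blk'.\decide$, which ensures the preserved old link $s_{j+1}=H(\ctr_j,y_j,y_j)$, so the edited chain still validates via \cref{line:old_state}, and all honest parties apply the same edit in the same round once $\policy$ outputs $\accept$. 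With these observations in place, the map slot $\mapsto$ (multiset of honest-held chain lengths) in $\Blk'_\policy$ coincides with that of $\Blk$, and the theorem follows. The main obstacle is therefore purely one of careful bookkeeping of the chain-modifying operations; no new probabilistic estimate is required.
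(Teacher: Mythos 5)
Your proof is correct and rests on the same key observation as the paper's: edits are in-place substitutions that never change $\length{\chain}$, and the block-creation step proceeds exactly as in $\Blk$, so the growth bound transfers. The paper's own proof is a two-line version of this argument; your additional bookkeeping (the superset property of the validity predicates and the preservation of old links under accepted edits) is sound and only makes the same reasoning more explicit.
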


\begin{proof}	
We note that $\Blk'$ extends $\Blk$, that by assumption satisfies chain growth. Also, note that in $\Blk'$ it is not possible to remove a block from the chain (for any policy $\policy$), thereby reducing the length of $\chain$. In other words, the edits performed do not alter the length of the chain. Therefore, we conclude that $\Blk'$ satisfies chain growth whenever $\Blk$ satisfies chain growth.

\end{proof}

\paragraph{Chain Quality} The chain quality property informally states that the ratio of adversarial blocks in any segment of a chain held by a honest party is no more than a fraction $\mu$, where $\mu$ is the fraction of resources controlled by the adversary.

\begin{definition}[Chain Quality~\cite{EC:GarKiaLeo15}]
Consider a portion of length $\ell$-blocks of a chain possessed by an honest party during any given round, for $\ell \in \NN$. Then, the ratio of adversarial blocks in this $\ell$ segment of the chain is at most $\mu$, where $0 <\mu \le 1$ is the chain quality coefficient.		
\end{definition}

\begin{theorem}
Let $H$ be a collision-resistant hash function. If $\Blk$ satisfies $(\mu,\ell)$-chain quality, then $\Blk'_\policy$ satisfies $(\mu,\ell)$-chain quality for any $(k,\ell,\rho)$-policy where $\rho > \mu$.  	
\end{theorem}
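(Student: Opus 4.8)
The plan is to reduce any violation of $(\mu,\ell)$-chain quality for $\Blk'_\policy$ to a violation of $(\mu,\ell)$-chain quality for $\Blk$ or to a collision for $H$. Assume towards a contradiction that there are $\adv,\env$ such that, with non-negligible probability over $\view \gets \mathsf{EXEC}^{\Blk'_\policy}(\adv,\env,\spar)$, some honest party at some round holds a chain $\chain'$ with an $\ell$-block segment in which strictly more than a $\mu$-fraction of the blocks are adversarial. Here I take the natural content-based notion of an adversarial block in $\Blk'$: a block is adversarial if its \emph{current} data was produced by the adversary, i.e.\ it was either mined by the adversary or its present content was installed by an adversarially proposed edit that no honest party ever endorsed (this is the notion for which the hypotheses $\rho>\mu$ and collision resistance are actually needed; under the mining-only notion the claim is immediate). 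I would first isolate the event $\bad$ that during the execution two \emph{distinct} candidate blocks $\candidateblk \ne \candidateblk'$ are ever placed in the pool $\rcbpool$ of an honest party with $H(\candidateblk)=H(\candidateblk')$. Running the whole experiment and outputting this pair is a collision finder for $H$, so $\Pr{\bad}\le\negl(\spar)$, and it suffices to reach the contradiction conditioned on $\neg\bad$.

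Conditioned on $\neg\bad$, every vote string $H(\candidateblk_j)$ appearing inside a block of $\chain'$ points to exactly one candidate block, so the ratio counted by $\policy$ in \cref{def:policy} is precisely the ratio of blocks whose miner deliberately chose to endorse $\candidateblk_j$. Next I would establish the structural correspondence between $\Blk'_\policy$ and $\Blk$: the algorithm $\redactreq$ (\cref{alg:request}) reuses the proof of work $\ctr_j$ and the old state $y_j$ of the original block, so performing an edit neither appends a block nor makes any fresh mining query. Hence an execution of $\Blk'_\policy$ can be emulated on top of an execution of $\Blk$ with the same random oracle, difficulty schedule and message delivery (this is the ``$\Blk'$ with no edits behaves like $\Blk$'' observation, pushed through the edits): the blocks of any honest party's chain in $\Blk'$ are in length- and miner-preserving bijection with the blocks of an underlying $\Blk$-chain, and in particular the set of \emph{adversarially mined} blocks, and their positions in every $\ell$-window, is exactly the one bounded by the $(\mu,\ell)$-chain quality of $\Blk$. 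Formally this is realised by a reduction $\advB$ that plays the chain-quality game for $\Blk$, simulates $\Blk'_\policy$ and $\env$ to $\adv$, and relays the offending window.

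It then remains to show that edits never convert an honestly produced block into an adversarial one. Let $\candidateblk_j$ be any edited block occurring in $\chain'$; by \cref{alg:validate} (\cref{line:old_state}) its presence required $\policy(\chain',\candidateblk_j)=\accept$, i.e.\ at least a $\rho$-fraction of the $\ell$ blocks of its voting period carry the vote $H(\candidateblk_j)$, and by \cref{def:policy} that whole voting window lies in $\prune{\chain'}{k}$, hence inside the stable chain to which chain quality applies. By the correspondence, at most a $\mu$-fraction of those $\ell$ blocks are adversarially mined; since $\rho>\mu$, a strictly positive fraction, hence at least one, of the voting-period blocks is honest, and under $\neg\bad$ the corresponding honest miner genuinely ran $\Blk'.\decide$ on $\candidateblk_j$, found it valid, and chose to endorse it. Thus every edited block of $\chain'$ has content vetted and endorsed by an honest party, so it is not adversarial; editing can only relabel an adversarial block as honest, never the reverse. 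Therefore the number of adversarial blocks in any $\ell$-segment of $\chain'$ is at most the number of adversarially mined blocks in the corresponding window of the underlying $\Blk$-chain, which is at most $\mu\cdot\ell$ by chain quality of $\Blk$ -- contradicting the assumption. Collecting the two cases, a successful $\adv$ either breaks chain quality of $\Blk$ (via $\advB$) or collision resistance of $H$, which proves the theorem.

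The step I expect to be the main obstacle is the structural-correspondence/simulation argument: one must argue carefully that the hybrid in which $\Blk'_\policy$ is emulated on top of a $\Blk$ challenge reproduces $\adv$'s view faithfully (identical oracle answers, difficulties and message deliveries) and, crucially, that the honest/adversarial attribution of mined blocks is preserved under arbitrary interleavings of edit proposals, votes, and approvals. Once that is in place, the counting argument driven by $\rho>\mu$ together with the collision bound on $\bad$ is comparatively routine.
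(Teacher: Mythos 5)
Your proposal is correct and follows essentially the same route as the paper's proof: the paper likewise argues that an edit can only be approved if at least a $\rho > \mu$ fraction of the voting-period blocks carry the vote, so at least one honest miner must have genuinely endorsed the candidate block, and that the only way around this is a hash collision between two distinct candidate blocks, which contradicts collision resistance of $H$. Your write-up merely makes explicit what the paper leaves informal (the reduction to the chain-quality game of $\Blk$, the isolation of the collision event, and the observation that edits preserve block count and miner attribution), so there is no substantive difference in approach.
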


\begin{proof}
We note that the only difference in $\Blk'_\policy$ in relation to $\Blk$ is that blocks can be edited. An adversary $\advA$ could edit an honest block $B$ in the chain $\chain$ into a malicious block $\candidateblk$ (e.g.,\ that contains illegal content), increasing the proportion of malicious blocks in the chain, and therefore breaking the chain quality property. We show below that $\advA$ has only a negligible probability of violating chain quality of $\Blk'$.

Let $\advA$ propose a malicious candidate block $B_j^\star$ for editing an honest block $B_j \in \chain$. Since $\advA$ possesses only $\mu$ computational power, by the chain quality property of $\Blk$ we know that the adversary mines at most $\mu$ ratio of blocks in the voting phase. As the policy stipulates, the ratio of votes has to be at least $\rho$ for $B^\star$ to be approved, where $\rho > \mu$. Therefore, $B^\star$ can only be approved by the policy $\policy$ if honest nodes vote for it. Observe that the adversary could try to build an ``honest looking" (e.g.,\ without illegal contents) candidate block $\widetilde{\candidateblk} \ne \candidateblk$ such that $H(\widetilde{\candidateblk}) = H(\candidateblk)$, in an attempt to deceive the honest nodes during the voting phase; the honest nodes could endorse the candidate block $\candidateblk$ during the voting phase, and the adversary would instead edit the chain with the malicious block $\widetilde{\candidateblk}$. The adversary has only a negligible chance of producing such a candidate block $\candidateblk$ where $H(\widetilde{\candidateblk}) = H(\candidateblk)$, since this would violate the collision-resistance property of the hash function $H$.

Moreover, $B^\star$ is incorporated to the chain only if it is an honest candidate block. This concludes the proof.

\end{proof}

\paragraph{Common Prefix} The common prefix property informally says that if we take the chains of two honest nodes at different time slots, the shortest chain is a prefix of the longest chain (up to the common prefix parameter $k$). We show the formal definition next.

\begin{definition}[Common Prefix~\cite{EC:GarKiaLeo15}]\label{def:common_prefix}
The chains $\chain_1,\chain_2$ possessed by two honest parties at the onset of the slots $\slot_1 < \slot_2$ are such that $\prune{\chain_1}{k} \preceq \chain_2$, where $\prune{\chain_1}{k}$ denotes the chain obtained by removing the last $k$ blocks from $\chain_1$, where $k \in \NN$ is the common prefix parameter.
\end{definition}

We remark however, that our protocol $\Blk'_\policy$ inherently does not satisfy~\cref{def:common_prefix}. To see this, consider the case where two chains $\chain_1$ and $\chain_2$ are held by two honest parties $P_1$ and $P_2$ at slots $\slot_1$ and $\slot_2$ respectively, such that $\slot_1 < \slot_2$. In slot $r$ starts the voting phase (that lasts $\ell$ blocks) for a candidate block $\candidateblk_j$ proposing to edit block $B_j$, such that $j + k \le r < \slot_1 \le \ell +k < \slot_2$. Note that at round $\slot_1$ the voting phase is still on, therefore $\policy(\chain_1,\candidateblk_j) = \voting$. By round $\slot_2$, the voting phase is complete and in case $\policy(\chain_2,\candidateblk_j) = \accept$ the block $B_j$ is replaced by $\candidateblk_j$ in $\chain_2$. However, in chain $\prune{\chain_1}{k}$ the $j$-th block is still $B_j$, since the edit of $\candidateblk_j$ is waiting to be confirmed. Therefore, $\prune{\chain_1}{k} \nprec \chain_2$, thereby violating~\cref{def:common_prefix}. 

The pitfall in~\cref{def:common_prefix} is that it does not account for edits or modifications in the chain. We therefore introduce a new definition that is suited for an editable blockchain (with respect to an editing policy).  The formal definition follows.

\begin{definition}[Editable Common prefix]\label{def:editable_common_prefix}
The chains $\chain_1,\chain_2$ of length $l_1$ and $l_2$, respectively, possessed by two honest parties at the onset of the slots $\slot_1 \le \slot_2$ satisfy one of the following:
	\begin{enumerate}
		\item $\prune{\chain_1}{k} \preceq \chain_2$, or
		\item for each $\candidateblk_j \in \prune{\chain_2}{(l_2-l_1)+k}$ such that $\candidateblk_j \notin \prune{\chain_1}{k}$, it must be the case that $\policy(\chain_2,\candidateblk_j) = \accept$, for $j\in [l_1-k]$,
	\end{enumerate}
	where $\prune{\chain_2}{(l_2-l_1)+k}$ denotes the chain obtained by pruning the last $(l_2-l_1)+k$ blocks from $\chain_2$, $\policy$ denotes the chain policy, and $k \in \NN$ denotes the common prefix parameter.
\end{definition}  

Intuitively, the above definition states that if there exists a block that violates the common prefix as defined in~\cref{def:common_prefix}, then it must be the case that this block is an edited block whose adoption was voted and approved according to the policy $\policy$ in chain $\chain_2$. We show that our protocol $\Blk'$ satisfies~\cref{def:editable_common_prefix} next.

\begin{theorem}\label{thm:editable_common_prefix}
Let $H$ be a collision-resistant hash function. If $\Blk$ satisfies $k$-common prefix, then $\Blk'_\policy$ satisfies $k$-editable common prefix for a $(k,\ell,\rho)$-policy.
\end{theorem}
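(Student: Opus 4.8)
The plan is to reduce the editable common prefix property of $\Blk'_\policy$ to the ordinary $k$-common prefix of the underlying protocol $\Blk$, treating edited blocks as the only possible source of divergence. First I would fix two honest parties $P_1,P_2$ holding chains $\chain_1,\chain_2$ of lengths $l_1\le l_2$ at slots $\slot_1\le\slot_2$, and consider the ``stripped'' chains obtained by undoing all approved edits, i.e.\ replacing each edited block $\candidateblk_j$ by the block $B_j$ that originally occupied position $j$. Since $\Blk'$ never removes or reorders blocks (as already argued for chain growth) and the block/chain validation of $\Blk'$ accepts an edited block only when its \emph{old state} $y_j$ still links consistently to its neighbours, the stripped chains are legitimate $\Blk$-chains that both honest parties would have held under the immutable protocol. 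Applying $k$-common prefix of $\Blk$ to these stripped chains gives $\prune{(\text{stripped }\chain_1)}{k}\preceq(\text{stripped }\chain_2)$.

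Next I would case-split. If $\chain_1$ and $\chain_2$ happen to agree on the edit status of every block in $\prune{\chain_1}{k}$, then the stripping is trivial on that prefix and $\prune{\chain_1}{k}\preceq\chain_2$, giving case (1) of~\cref{def:editable_common_prefix}. Otherwise there is some position $j\in[l_1-k]$ where $\chain_2$ contains a block $\candidateblk_j$ that is absent from $\prune{\chain_1}{k}$ (it is either unedited there, or still in its voting phase). Here I would invoke~\cref{alg:validate} and~\cref{alg:candidate}: for $P_2$'s chain $\chain_2$ to be accepted as valid by the honest $P_2$, the broken ``new link'' at position $j$ forces the \cref{line:old_state} branch to hold, which in particular requires $\policy(\chain_2,\candidateblk_j)=\accept$. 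This is exactly the conclusion in case (2), with the block count $(l_2-l_1)+k$ in $\prune{\chain_2}{(l_2-l_1)+k}$ chosen so that position $j\le l_1-k$ is bounded away from the unstable tail of $\chain_2$, and so that the edit has had a full $\ell$-block voting period to complete (using $\slot_1\le\slot_2$ together with chain growth to guarantee enough blocks have elapsed).

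The role of collision resistance of $H$ is to rule out the one adversarial escape route: the honest check in~\cref{alg:validate} identifies a candidate block by its hash $H(\candidateblk_j)$ (both for counting votes and in the \cref{line:old_state} link test $s_j = H(\ctr_{j-1},y_{j-1},y_{j-1})$, and the policy test on $H(\editreq)$ in~\cref{def:policy}), so without a hash collision the block $\candidateblk_j$ that $\chain_2$ actually contains is the same block that accumulated $\ge\rho$ votes and passed $\policy$. I would therefore argue that, except with negligible probability over $\view\gets\mathsf{EXEC}^{\Blk'}(\adv,\env,\spar)$, no $\candidateblk_j$ distinct from the voted-upon block can sit at position $j$ of a valid $\chain_2$, so the $\policy(\chain_2,\candidateblk_j)=\accept$ conclusion genuinely refers to the approved edit.

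The main obstacle I expect is the bookkeeping around the pruning parameter $(l_2-l_1)+k$ and the interplay between slot times and block counts: I need to be sure that any block witnessing a common-prefix violation really does lie within $\prune{\chain_2}{(l_2-l_1)+k}$ and has a \emph{completed} voting phase by slot $\slot_2$ (so that $\policy$ returns $\accept$ rather than $\voting$), which is where the earlier persistence/liveness assumptions and the chain-growth bound must be combined carefully; the hash-collision and ``stripped chain is a valid $\Blk$-chain'' steps are comparatively routine given the block-validation design.
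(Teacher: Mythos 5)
Your proposal is correct and follows essentially the same route as the paper's proof: reduce the unedited structure to the $k$-common prefix of $\Blk$, use collision resistance of $H$ to guarantee that the block sitting at position $j$ of $\chain_2$ is the very candidate that accumulated the votes, and conclude from the honest party's validation that $\policy(\chain_2,\candidateblk_j)=\accept$, i.e.\ case (2) of \cref{def:editable_common_prefix}. Your write-up is considerably more detailed than the paper's (the ``stripped chain'' reduction, the explicit appeal to the \cref{line:old_state} branch of \cref{alg:validate}, and the bookkeeping around the pruning parameter $(l_2-l_1)+k$ are all left implicit or unaddressed in the paper's two-paragraph argument), but the underlying ideas coincide.
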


\begin{proof}
If no edits were performed in a chain $\chain$, then the protocol $\Blk'_\policy$ behaves exactly like the immutable protocol $\Blk$, and henceforth the common prefix property follows directly.
	
However, in case of an edit, consider an adversary $\advA$ that proposes a candidate block $\candidateblk_j$ to edit $B_j$ in chain $\chain_2$, which is later edited by an honest party $P_2$ at slot $\slot_2$. Observe that by the collision resistance property of $H$, $\advA$ is not able to efficiently produce another candidate block $\widetilde{\candidateblk_j} \ne \candidateblk_j$ such that $H(\widetilde{\candidateblk_j}) = H(\candidateblk_j)$. Therefore, since $P_2$ is honest and adopted the edit $\candidateblk_j$ in $\chain_2$, it must be the case that $\candidateblk_j$ received enough votes such that $\policy(\chain_2,\candidateblk_j) = \accept$. This concludes the proof.
\end{proof}

\noindent\emph{How the properties play together}: By showing that $\Blk'$ satisfies the three aforementioned properties, we show that $\Blk'_\policy$ is a live and persistent blockchain protocol immutable against edits not authorised by the policy $\policy$.

The editable common prefix property ensures that only policy approved edits are performed on the chain. The Chain quality property, for a $(k,\ell,\rho)$-policy $\policy$ where $\rho > \mu$, ensures that an adversary does not get a disproportionate contribution of blocks to the chain.

\section{Integrating into Bitcoin}\label{sec:instantiation}

In this section we describe how our generic editable blockchain protocol (\cref{fig:protocol}) can be integrated into Bitcoin. For simplicity, we consider one redaction per block and the redaction is performed on one or more transactions included in the block. The extension of the generic protocol for multiple redactions (described in~\cref{apx:extension}) can be immediately applied to the construction described in this section. Next, we give a brief background on the Bitcoin protocol.

\subsection{Bitcoin Basics}
\paragraph{Transactions} A simple transaction $\tx$ in Bitcoin has the following basic structure: an input script, an output script with a corresponding amount, and a witness. More complex transactions may have multiple input and output scripts and/or more complex scripts. A transaction $\tx'$ that spends some output $\tau$ of $\tx$, has the ID of $\tx$ in its input, denoted by $\tx_\id := H(\tx)$, and a witness $x$ that satisfies the output script $\tau$ of $\tx$ (as shown in~\cref{fig:transaction}). The amount $\amount_2$ being spent by the output script $\tau_2$ needs to be smaller (or equal) than the amount $\amount_1$ of $\tau_1$. The most common output scripts in Bitcoin consists of a public key, and the witness $x$ is a signature of the transaction computed using the corresponding secret key. We refer the reader to~\cite{script} for a comprehensive overview of the Bitcoin scripting language.

\paragraph{Insertion of Data} Users are allowed to propose new transactions containing arbitrary data, that are then sent to the Bitcoin network for a small fee. Data can be inserted into specific parts of a Bitcoin transaction, namely the output script, input script and witness. Matzutt et al.~\cite{matzutt2018quantitative} provide a quantitative analysis of data insertion methods in Bitcoin. According to their analysis, $\mathtt{OP\_RETURN}$ and coinbase transactions are the major pockets apart from some non-standard transactions, where data is inserted.

\floatstyle{plain}
\restylefloat{figure}
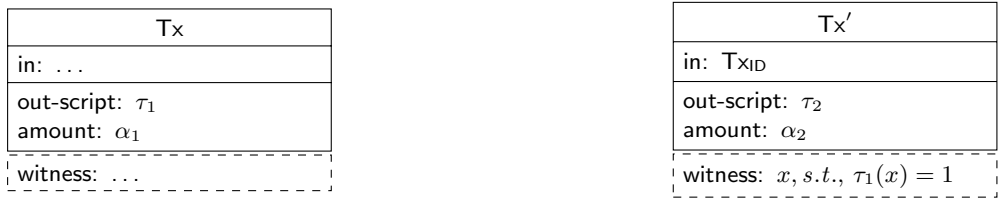
\begin{figure}
\begin{minipage}{0.45\linewidth}
\begin{center}

\begin{tikzpicture}
     \node [box=3]
    (BE)
    {
    \nodepart[align=center]{one}
    {$\tx$}
    \nodepart{two} in: \dots
      \nodepart{three} out-script: $\tau_1$\\
      amount: $\amount_1$
    };
    \node [below=of BE, wbox]
    (WE)
    { 
    witness: \dots
    };
    \end{tikzpicture}

\end{center}
\end{minipage}
\hfill
\begin{minipage}{0.45\linewidth}
\begin{center}

\begin{tikzpicture}
  \node [box=3]
    (BE)
    {
    \nodepart[align=center]{one}
    {$\tx'$}
    \nodepart{two} in: $\tx_\id$
      \nodepart{three} out-script: $\tau_2$\\
      amount: $\amount_2$
    };
    \node [below=of BE, wbox]
    (WE)
    { 
    witness: $x, \mathit{s.t.,}\ \tau_1(x) = 1$
    };
    \end{tikzpicture}
\end{center}
\end{minipage}
\caption{The structure of a transaction in Bitcoin. The transaction $\tx'$ is spending the output $\tau_1$ of transaction $\tx$.}
\label{fig:transaction}
\end{figure}

\paragraph{Block Structure} A Bitcoin block consists of two parts, namely the block header, and a list of all transactions within the block. The structure of the block header is detailed in~\cref{fig:modifyheader}, whereas a pictorial representation of the list of transactions can be found in~\cref{fig:modifytxlist}.


\subsection{Modifying the Bitcoin Protocol}
In this section we detail the modifications to the Bitcoin protocol necessary to integrate it to our generic editable blockchain protocol of~\cref{sec:editing}. The resulting protocol is a version of Bitcoin that allows for redaction of (harmful) data from its transactions.

By redaction of transactions, we mean removing data from a transaction without making other changes to the remaining components of the transaction. As shown in~\cref{fig:data_insert_removal_a}, consider a transaction $\tx_1$ that contains some harmful data in its output script, and let $\tx_1^\star$ be a candidate transaction to replace $\tx$ in the chain, where $\tx_1^\star$ is exactly the same as $\tx_1$, except that the harmful data is removed (~\cref{fig:data_insert_removal_b}).


\begin{figure}[h]
\centering
\begin{tikzpicture}

  \node [box=3]
    (BE)
    {
    \nodepart[align=center]{one}
    {$\redactTx$}
    \nodepart{two} in: \dots
      \nodepart{three} out-script: $\tx_{1_\id},{\tx_{1_\id}^\star}$
     };
    \node [below=of BE,wbox]
    (WE)
    { 
    witness: \dots
    };
    \end{tikzpicture}
    
\caption{The special transaction $\redactTx$ is broadcasted to the network to propose a redaction of transaction $\tx_1$ for the candidate transaction $\tx_1^\star$.}
\label{fig:editrequest}
\end{figure}
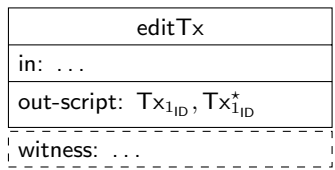

\paragraph{Proposing Redactions} A user who wishes to propose a redaction proceeds as follows: First, constructs a special transaction $\redactTx$ (as shown in~\cref{fig:editrequest}) containing $\tx_{1_\id}$ and $\tx_{1_\id}^\star$, that respectively denotes the hash of the transaction $\tx_1$ being redacted, and the hash of $\tx_1^\star$ that is the candidate transaction to replace $\tx_1$ in the chain\footnote{We note that our transaction ID is Segwit compatible, as the witness is not used with the hash $H$ to generate a transaction's ID.}. Then, broadcasts the special transaction $\redactTx$ and the candidate transaction $\tx_{1}^\star$ to the network; $\redactTx$ requires a transaction fee to be included in the blockchain, while $\tx_{1}^\star$ is added to a pool of candidate transactions\footnote{If a candidate transaction does not have a corresponding $\redactTx$ in the blockchain then the transaction is not included in the candidate pool, and it is treated as spam instead.}. The candidate transaction $\tx_{1}^\star$ is validated by checking its contents with respect to $\tx_{1}$, and if it is valid, then it can be considered for voting.   

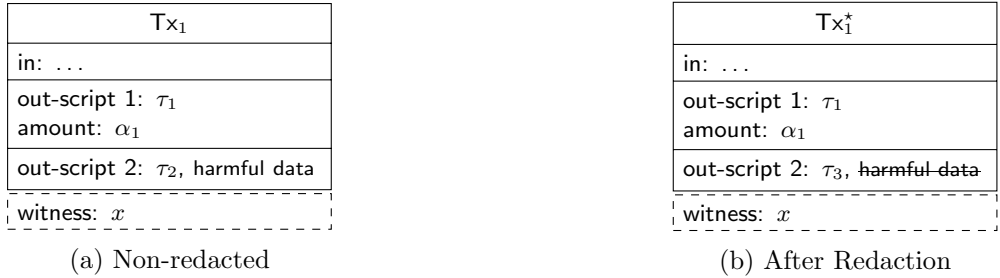
\begin{figure}[]
\begin{minipage}{0.45\linewidth}
\begin{center}

\begin{tikzpicture}
  \node [box=4]
    (BE)
    {
    \nodepart[align=center]{one}
    {$\tx_1$}
    \nodepart{two} in: \dots
      \nodepart{three} out-script 1: $\tau_1$\\
      amount: $\amount_1$
            \nodepart{four} out-script 2: $\tau_2$, \scriptsize{harmful data}
    };
    \node [below=of BE,wbox]
    (WE)
    { 
    witness: $x$
    };
    \end{tikzpicture}
\subcaption{Non-redacted}
\label{fig:data_insert_removal_a}
\end{center}
\end{minipage}
\hfill
\begin{minipage}{0.45\linewidth}
\begin{center}

\begin{tikzpicture}

  \node [box=4]
    (BE)
    {
    \nodepart[align=center]{one}
    {$\tx_1^\star$}
    \nodepart{two} in: \dots
      \nodepart{three} out-script 1: $\tau_1$\\
      amount: $\amount_1$
            \nodepart{four} out-script 2: $\tau_3$, \scriptsize{\st{harmful data}}
    };
    \node [below=of BE,wbox]
    (WE)
    { 
    witness: $x$
    };
    \end{tikzpicture}
    \subcaption{After Redaction}
        \label{fig:data_insert_removal_b}
\end{center}
\end{minipage}
\caption{(a) The transaction $\tx_1$ contains harmful data, and (b) the candidate transaction $\tx_1^\star$ contains a copy of all the fields of $\tx_1$, with exception of the harmful data.}
\label{fig:data_insert_removal}
\end{figure}


\paragraph{Redaction Policy} 
The redactable Bitcoin protocol is parameterised by a policy parameter $\policy$ (\cref{def:policy}). The policy $\policy$ dictates the requirements and constraints for redaction operations in the blockchain. An informal description of a (basic) policy for Bitcoin would be:

 A proposed redaction is approved valid if the following conditions hold:
 \begin{itemize}
  	\item It is identical to the transaction being replaced, except that it can remove data.

 	\item It can only remove data that can never be spent, e.g.,\ $\mathtt{OP\_RETURN}$ output scripts.
     
    \item It does not redact votes for other redactions in the chain. 
 	
 	\item It received more than $50$\% of votes in the $1024$ consecutive blocks (voting period) after the corresponding $\redactTx$ is stable in the chain.

 \end{itemize}

\noindent where voting for a candidate transaction $\tx_{1}^\star$ simply means that the miner includes $\redactTx_\id = H(\tx_{1_\id} || \tx_{1_\id}^\star)$ in the coinbase (transaction) of the new block he produces. After the voting phase is over, the candidate transaction is removed from the candidate pool. 

The reason for restricting the redactions to non-spendable components of a transaction (e.g.,\ $\mathtt{OP\_RETURN}$) is that, permitting redactions on spendable content could lead to potential misuse (\cref{sec:attacks}) and future inconsistencies within the chain. We stress however, that this is not a technical limitation of our solution, but rather a mechanism to remove the burden of the user on deciding what redactions could cause inconsistencies on the chain in the future. We feel that the aforementioned policy is suitable for Bitcoin, but as policies are highly dependent on the application, a different policy can be better suited for different settings.  

\paragraph{New Block Structure}  
To account for redactions, the block header must accommodate an additional field called $\mathsf{old\_merkle\_root}$. When a block is initially created, i.e., prior to any redaction, this new field takes the same value as $\mathsf{merkle\_root}$. For a redaction request on block $B_j$, that proposes to replace $\tx_1$ with the candidate transaction $\tx_1^\star$, the transactions list of the candidate block $\candidateblk_j$ (that will replace $B_j$) must contain $\tx_{1_\id} = H(\tx_1)$ in addition to the remaining transactions. A new $\mathsf{merkle\_root}$ is computed for the new set of transactions, while $\mathsf{old\_merkle\_root}$ remains unchanged. To draw parallels with the abstraction we described in~\cref{sec:abstract-new}, $G(s,x)$ is analogous to $\mathsf{merkle\_root}$ and $y$ is analogous to $\mathsf{old\_merkle\_root}$. 

\begin{figure}[h]
\centering
\begin{tabular}{p{.31\columnwidth} p{.55\columnwidth}}
\textbf{Value} & \textbf{Description} \\ \thickhline
$\mathtt{hash\_prev}$ & hash of the previous block header \\ \hline
$\mathtt{merkle\_root}$ & root of the merkle tree (whose the leaves are the transactions)\\ \hline
$\mathtt{difficulty}$ & the difficulty of the proof-of-work\\ \hline
$\mathtt{timestamp}$ & the timestamp of the block \\ \hline
$\mathtt{nonce}$ & nonce used in proof-of-work \\ \thickhline
\rowcolor{Gainsboro!30}$\mathtt{old\_merkle\_root}$ & root of the merkle tree of old set of transactions\\ \thickhline

\end{tabular}
\caption{Structure of the Bitcoin block header. The last highlighted field ($\mathtt{old\_merke\_root}$) is only included in the block header of the extended (editable) protocol.}
\label{fig:modifyheader}
\end{figure}

\paragraph{Block Validation} The validation of a block consists of the steps described below.
\begin{itemize}
	\item \emph{Validating transactions}: The block validates all the transactions contained in its transactions list; the validation of non-redacted transactions is performed in the same way as in the immutable version of the protocol. Transactions that have been previously redacted require a special validation that we describe next. Consider the case presented in~\cref{fig:data_insert_removal}, where $\tx_{1}$ is replaced by $\tx_{1}^\star$. The witness $x$ was generated with respect to $\tx_{1_\id}$ and is not valid with respect to $\tx_{1_\id}^\star$. Fortunately, the old state $\tx_{1_\id}$ (hash of the redacted transaction) is stored, as shown in~\cref{fig:sub:after_edit}, ensuring that the witness $x$ can be successfully validated with respect to the old version of the transaction. Therefore, we can ensure that all the transactions included in the block have a valid witness, or in case of redacted transactions, the old version of the transaction had a valid witness. To verify that the redaction was approved in the chain one needs to find a corresponding $\redactTx$ (\cref{fig:editrequest}) in the chain, and verify that it satisfies the chain's policy.   

\item \emph{PoW verification}: The procedure to verify the PoW puzzle is described in~\cref{alg:validateblk}. If the block contains an edited transaction, i.e., $\mathsf{old\_merkle\_root} \ne \mathsf{merkle\_root}$, then substitute the value in $\mathsf{hash\_merkle\_root}$ with that in $\mathsf{old\_merkle\_root}$ and check if the hash of this new header is within $T$. 
  \end{itemize}
  
\begin{figure}[h]
\begin{minipage}[t]{0.45\linewidth}
\centering
\begin{tikzpicture}

   \node [box=4]
    (BE)
    {
    \nodepart[align=center]{one}
    {$\tx_1$}
    \nodepart[align=center]{two}
    {$\tx_2$}
    \nodepart[align=center]{three}
    {$\tx_3$}
        \nodepart[align=center]{four}
    {
    $\rvdots$ 
    }
    };
    \end{tikzpicture}
\subcaption{Non-redacted.}
\label{fig:sub:before_edit}
\end{minipage}
\hfill
\begin{minipage}[t]{0.45\linewidth}

	\centering
	\begin{tikzpicture}
  \node [box=4]
    (BE)
    {
    \nodepart[align=center]{one}
    {$\tx_1^\star, \tx_{1_\id}$}
    \nodepart[align=center]{two}
    {$\tx_2$}
    \nodepart[align=center]{three}
    {$\tx_3$}
        \nodepart[align=center]{four}
    {
    $\rvdots$ 
    }
    };
    \end{tikzpicture}
		\subcaption{Redacted transaction $\tx_1$.}\label{fig:sub:after_edit}
\end{minipage}
\caption{List of transactions contained within a block before (left) and after (right) redacting a transaction in the block.}
\label{fig:modifytxlist}
\end{figure}
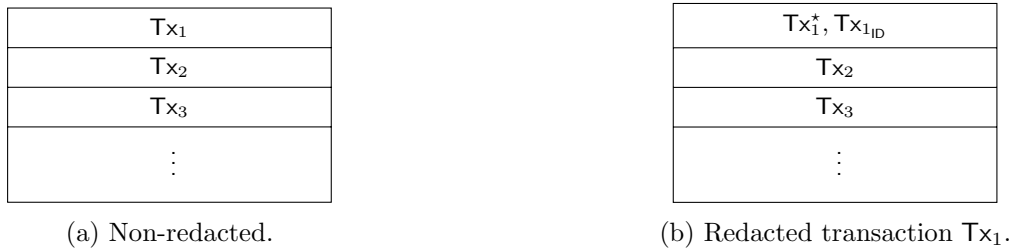

 \paragraph{Chain Validation} To validate a full chain a miner needs to validate all the blocks within the chain. The miner can detect if a block has been redacted by verifying its hash link with the next block; in case of a redacted block, the miner verifies if the redaction was approved according to the chain's policy. The miner rejects a chain as invalid if \emph{any} of the following holds: (1) a block's redaction was not approved according to the policy, (2) the $\mathsf{merkle\_root}$ value of the redacted block is incorrect with respect to the set of transactions (that contains the hash of the redacted transaction) or (3) a previously approved redaction was not performed on the chain.
  
 \paragraph{Transaction Consistency} Removing a transaction entirely or changing spendable data of a transaction may result in serious inconsistencies in the chain. For example, consider a transaction $\tx_1$ that has two outputs denoted by $A$ and $B$, where the second output $B$ has a data entry and the first output $A$ contains a valid spendable script that will be eventually spent by some other transaction $\tx'$. If the redaction operation performed on $\tx_1$ affects the output script of $A$, $\tx'$ may become invalid, causing other transactions to become invalid. A similar problem may arise if the redaction is performed on the input part of $\tx_1$ enabling the user who generated $\tx_1$ to possibly double spend the funds. Therefore, we only allow redactions that do not affect a transaction's consistency with past and future events.
 

  \paragraph{Redaction and Retrievability} The redaction policy $\policy$ for Bitcoin restricts redactions to only those operations that do not violate a transaction's consistency. This means that we do not allow monetary transactions to be edited (such as standard coin transfer). We stress, however that the main objective of redacting a transaction $\tx$ is to prevent some malicious content $x$, that is stored inside $\tx$, from being broadcasted as part of the chain, thereby ensuring that the chain and its users are legally compliant. Note that we cannot prevent an adversary from locally storing and retrieving the data $x$, even after its redaction, since the content was publicly stored in the blockchain. In this case, the user that willingly keeps the malicious (and potentially illegal) data $x$ will be liable.

  
 \paragraph{Accountability} Our proposal offers accountability during and after the voting phase is over. Moreover, the accountability during the voting phase prevents the problem of transaction inconsistencies discussed above.
 \begin{itemize}
 	\item \emph{Voting Phase Accountability}:  During the voting phase, anyone can verify all the details of a redaction request. The old transaction and the proposed modification (via the candidate transaction) are up for public scrutiny. It is publicly observable if a miner misbehaves by voting for a redaction request that, apart from removing data, also tampers with the input or (a spendable) output of the transaction, in turn affecting its transaction consistency. This could discourage users from using the system due to its unreliability as a public ledger for monetary purposes. Since the miners are heavily invested in the system and are expected to behave rationally, they would not vote for such an edit request (that is against the policy) during the voting phase.
 	
 	\item \emph{Victim Accountability}: After a redaction is performed, our protocol allows the data owner, whose data was removed, to claim that it was indeed her data that was removed. Since we store the hash of the old transaction along with the candidate transaction in the edited block (refer to~\cref{fig:sub:after_edit}), it is possible for a user that possesses the old data (that was removed) to verify it against the hash that is stored in the redacted block. This enforces accountability on the miners of the network who vote for a redaction request by discouraging them from removing benign data. At the same time, our protocol  guarantees protection against false claims, as the hash verification would fail.
 \end{itemize}

\section{Proof-of-Concept Implementation}\label{sec:implementation}
In this section we report on a Python proof-of-concept implementation used for evaluating our approach. 
 We implement a full-fledged Blockchain system based on Python 3 that mimics all the basic functionalities of Bitcoin. Specifically, we include a subset of Bitcoin's script language that allows us to insert arbitrary data into the chain, which can be redacted afterwards. The redacting mechanism is built upon the proposed modifications to Bitcoin that we describe in~\cref{sec:instantiation}. For conceptual simplicity we rely on PoW as the consensus mechanism. 

\floatstyle{plain} 
\restylefloat{figure}

\begin{figure}[t]
    \begin{subfigure}[t]{.3\textwidth}
        \centering
        \begin{tikzpicture}                                                            
  \begin{axis}[
  width=1.2\textwidth,
    height=11em,
    ymin=80,  
    bar width=3mm,
    xtick=data,
    xmin=0.3,
    xmax=5.7,
    ymax=5,
    ymin=-0.5,
    ybar=0,
    extra y ticks       = 0,
    xlabel={\footnotesize Chain size (in \# of blocks)},
    yticklabel=\scriptsize \pgfmathparse{1*\tick}\pgfmathprintnumber{\pgfmathresult}\,\%,
    xticklabel=\scriptsize \pgfmathparse{1*\tick}\pgfmathprintnumber{\pgfmathresult}0K,
    yticklabel style={/pgf/number format/.cd,fixed,precision=2}
        ]  \addplot
coordinates {
(1, 3.053083427557031)
(2, 2.9673926381325466)
(3, 3.7991830096397026)
(4, 1.0473486742365898)
(5, 0)};
  \end{axis}
\end{tikzpicture}

\caption{}        
        \label{fig:benchmark_comp}
    \end{subfigure}\hfill
    \begin{subfigure}[t]{.3\textwidth}
        \centering
        \begin{tikzpicture}
  \begin{axis}[
  width=1.2\textwidth,
    height=11em,
    ymin=80,  
    bar width=3mm,
    xtick=data,
    xmin=0.5,
    xmax=5.5,
    ymax=5,
    ymin=-0.5,
    ybar=0,
    extra y ticks       = 0,
    xlabel={\footnotesize $\#$ of redactions in the chain},
    yticklabel=\scriptsize \pgfmathparse{1*\tick}\pgfmathprintnumber{\pgfmathresult}\,\%,
    xticklabel=\scriptsize \pgfmathparse{1*\tick}\pgfmathprintnumber{\pgfmathresult}K,
    yticklabel style={/pgf/number format/.cd,fixed,precision=2}
        ]
  \addplot
coordinates {
(1, 1.3070767719031349)
(2, 0.8628298316230483)
(3, 1.9483735313250665)
(4, 1.7370226878977915)
(5, 2.9612391807180543)
};
  \end{axis}
\end{tikzpicture}

\caption{}
        \label{fig:benchmark_redact}
    \end{subfigure}\hfill
    \begin{subfigure}[t]{.3\textwidth}
        \centering
        \begin{tikzpicture}
  \begin{axis}[
  width=1.2\textwidth,
    height=11em,
    ymin=80,  
    bar width=3mm,
    xtick=data,
    xmin=10,
    xmax=60,
    ymax=5,
    ymin=-0.5,
    ybar=0,
    extra y ticks       = 0,
    xlabel={\footnotesize Voting period $\ell$},
    xticklabel=\scriptsize \pgfmathparse{1*\tick}\pgfmathprintnumber{\pgfmathresult},
    yticklabel=\scriptsize \pgfmathparse{1*\tick}\pgfmathprintnumber{\pgfmathresult}\,\%,
    yticklabel style={/pgf/number format/.cd,fixed,precision=2}
        ]
  \addplot
coordinates {
(15, 0)
(20, 0)
(25, 0)
(30,0)
(35,1.6217981388567646)
(40,1.2889899688943318)
(45,2.4215612763945873)
(50,1.5354370027936013)
(55,3.3150705290616647)
};
  \end{axis}
  
  \end{tikzpicture}

\caption{}      
        \label{fig:benchmark_period}
    \end{subfigure}
\caption{The graphs above show the overhead (in percentage) of our proof-of-concept implementation in each experiment performed. In (a) the graph shows the validation time overhead required to validate a redactable chain (with no redactions) compared to an immutable chain; in (b) the graph shows the validation time overhead required to validate a chain for an increasing number of redactions, compared to a redactable chain with no redactions, and finally in (c) the graph shows the validation time overhead required to validate a chain (with $1\%$ of the blocks redacted) for increasing voting periods, compared to a chain (with $1\%$ of the blocks redacted) on a fixed voting period of $\ell = 5$.}
\end{figure}
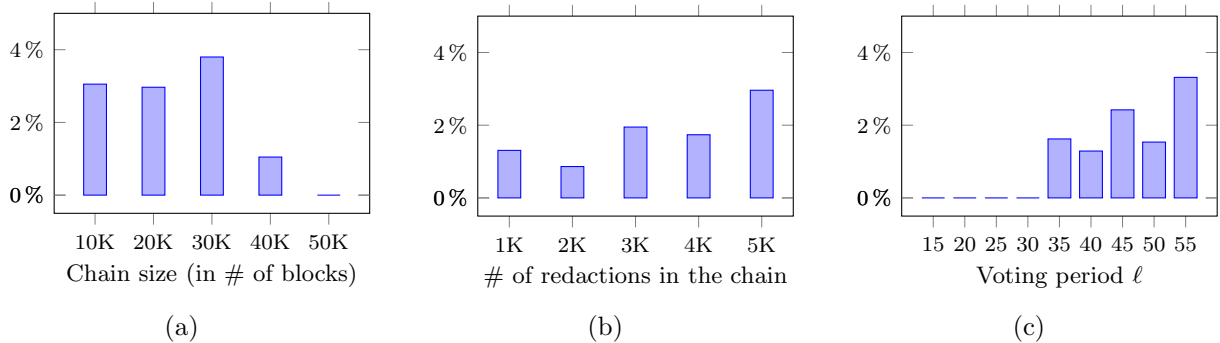

\subsection{Benchmarking}\label{sec:benchmark}
We detail the performance achieved by our implementation running several experiments. The benchmarking was performed in a virtual environment on a Linux server with the following specifications.
\begin{itemize}
        \item Intel Xeon Gold 6132 CPU @ $2.60$GHz
        \item 128GB of RAM
        \item Debian Linux 4.9.0-6-amd64
        \item Python $3.5.3$.
\end{itemize}

We measure the run time of~\cref{alg:validate} by validating chains of varying lengths (i.e., number of blocks) and with different numbers of redactions in the chain. For each experiment, a new chain is created and validated $50$ times, then the arithmetic mean of the run time is taken over all runs.
Each chain consists of up to $50,000$ blocks, where each block contains $1,000$ transactions. Note that a chain of size $50,000$ blocks approximates a one year snapshot of the bitcoin blockchain.

 The great variation of the results shown in the experiments is due to the  randomness involved in the chain creation and validation process, since each chain will contain its own set of (different) transactions, slightly influencing the run time.

 \paragraph{Overhead Compared to Immutable Blockchain} For the first series of experiments, we generate chains of length ranging from $10,000$ up to $50,000$ blocks. We generate both, immutable and redactable chains (with no redactions). 
  The goal here is to measure the overhead that comes with the integration of our redactable blockchain protocol with an immutable blockchain when there are no redactions performed. The results in~\cref{fig:benchmark_comp} indicate that there is only a tiny overhead. Interestingly, we note that as the size of the chain grows, the overhead tends to get smaller; this is because on a chain without redactions the only extra step required is to check if there are any votes in the coinbase transaction of a new block, what becomes negligible compared to the verification time as the chain grows larger.

 \paragraph{Overhead by Number of Redactions} For the second series of experiments, we generate redactable chains with the number of redactions ranging from $2\%$ ($1,000$ redactions) to $10\%$ ($5,000$ redactions) of the blocks. The redacted transactions within a block contains dummy data consisting of $4$ bytes that are removed during the experiment. 
This experiment is intended to measure the overhead with respect to the number of redactions performed in a chain compared to a redactable chain \emph{with no} redactions. The results in~\cref{fig:benchmark_redact} show that the overhead tends to be at most linear in the number of redactions, since in our prototype instead of looking ahead whether there is a redaction request and a sufficient number of votes, we keep track of the redaction request and wait for its votes and eventual confirmation. 

\paragraph{Overhead by the Voting Parameters $\ell$ and $\rho$} In the last series of experiments, we consider chains with $1\%$ of the blocks redacted. 
We vary the voting period $\ell$ to measure how it influences the validation time compared to a chain with $1\%$ of blocks redacted but with a voting period of $\ell=5$. The threshold of votes $\rho$ is set to $\left(\lfloor\frac{\ell}{2}\rfloor + 1\right)/\ell$ (i.e., requiring majority number of blocks in the voting period to contain votes for approving a redaction). 
	The results in~\cref{fig:benchmark_period} show that the overhead is very small (even negligible for small sizes of $\ell$) and tends to be at most linear in $\ell$. This meets our expectations, since the overhead in validation time originates from keeping and increasing the voting counts over the voting period $\ell$. In the worst case, where $\rho=1$ we need to keep track of the voting count over the entire voting period.

\section{Discussion}\label{sec:attacks}
In this section we discuss some of the generic attacks on our system and how it is immune to such attacks. 


\paragraph{Unapproved Editing} A malicious miner could pass off an edit on the blockchain that does not satisfy the network's policy. This can occur if the miner presents the blockchain with an edit that has not been considered for voting, or has gathered insufficient votes. In any of the above cases, it is possible for any user in the network to account for an edit by verifying in the chain if the exact edit presented by the miner is approved or not. And since majority of the miners in the network is honest, the user accepts an approved edit as an honest edit.

\paragraph{Scrutiny of Candidate Blocks} It is in the interest of the (honest) miners and the system as a whole, to actively scrutinise a candidate block and decide on voting based on its merit.
Therefore, the miners are strongly discouraged from using a default strategy in voting, e.g., always vote for a candidate block without scrutiny, using a pre-determined strategy that is agnostic to what the candidate block is proposing. 

\paragraph{Denial of Service} A malicious miner may try to flood the network with edit requests as an attempt to slow down transaction confirmation in the chain. However, the miner is deterred from doing this because he incurs the cost of a transaction fee for the $\redactTx$ that is part of his edit request similar to other standard transactions. Moreover, it may also be the case for the $\redactTx$ to incur a higher transaction fee as a strong deterrent against spamming. 

\paragraph{False Victim} A malicious user may wrongly claim that a particular transaction related to him was edited. For example, he may claim that some monetary information was changed where he was the beneficiary. Since such an edit could affect the trust in the system, the user could potentially affect the credibility of the entire system. We prevent such an attack through  victim accountability of our protocol. We can verify the user's claim against the hash of the old version of the transaction that is stored in the chain itself. Given the hash function is collision resistant, a wrong claim would fail the check. 

\paragraph{Double Spend Attacks} Consider a scenario where a malicious user is the recipient of a transaction. If this transaction was edited by removing some data stored in it, the hash of the new version of the transaction is different. If the miner had already spent the funds from the old version of this transaction, after the edit, he may attempt a double spend by exploiting the new version of the transaction. This is prevented by associating the new version and the old version of the edited transaction with each other, thereby noticing such a double spend. If the funds had already been spent, the old version would be a spent transaction. Because the edit that is performed does not conflict with the consistency of the transaction, the new version of the transaction would also be a spent transaction.

\paragraph{Consensus delays} Consider a scenario where two different users hold chains with a different set of redacted blocks, and therefore cannot arrive at a consensus on the final state of the chain, what may result in delays. Assuming the miners have not locally redacted blocks on their own and have behaved honestly according to the protocol, this scenario would mean that the different set of redacted blocks in the chains held by the two miners have been approved by the policy. However, this would be a blatant violation of the Editable common prefix property of our protocol (\Cref{thm:editable_common_prefix}).

  \section{Related work}
  
   \ifieeesp\subsubsection{Bitcoin and Applications} \else \paragraph{Bitcoin and Applications} \fi Several works~\cite{andrychowicz2014modeling,ateniese2014certified,CCS:PasShe15}  have analysed the properties and extended the features of the Bitcoin protocol. Bitcoin as a public bulletin board has found several innovative applications far beyond its initial scope, e.g., to achieve fairness and correctness in secure multi-party computation~\cite{SP:ADMM14,FCW:ADMM14,C:BenKum14,CCS:KumBen14}, to build smart contracts~\cite{SP:KMSWP16}, to distributed cryptography~\cite{C:AndDzi15}, and more~\cite{CCS:KumMorBen15,CCS:KiaTan15,breidenbach2018enter}.
   
   \ifieeesp\subsubsection{Content Insertion in Bitcoin}\else \paragraph{Content Insertion in Bitcoin}\fi There have been several works~\cite{ateniese2017redactable,matzutt2016poster,mcreynolds2015cryptographic,puddu2017muchain,shirriff2014hidden,sleiman2015bitcoin} on analysing and assessing the consequences of content insertions in public blockchains. They shed light on the distribution and the usage of such inserted data entries. The most recent work of Matzutt et al.~\cite{matzutt2018quantitative} gives a comprehensive quantitative analysis of illicit content insertions including, insertion techniques, potential risks and rational incentives. They also show that compared to other attacks~\cite{FC:EyaSir14,heilman2015eclipse} on Bitcoin system, illicit content insertion can pose immediate risks to all users of the system.
   
   \ifieeesp\subsubsection{Proactive Countermeasures} \else \paragraph{Proactive Countermeasures} \fi Proactive measures to detect illicit material circulated in the network and detecting them have been studied~\cite{roesch1999snort,CCS:IKBS00,hu2014flowguard}. In a blockchain setting, preventive solutions~\cite{chepurnoy2016rollerchain,bruce2014mini,molina2017pascalcoin} focus on maintaining only monetary information instead of the entire ledger history. Matzutt et al.~\cite{matzutt2018thwarting} use a rational approach of discouraging miners from inserting harmful content into the blockchain. They advocate a minimum transaction fee and mitigation of transaction manipulatability as a deterrent for the same.

\section*{Acknowledgment}

  This work is a result of the collaborative research project PROMISE
  (16KIS0763) by the German Federal
  Ministry of Education and Research (BMBF).
  FAU authors were also supported by the German research foundation
  (DFG) through the collaborative research center 1223, and by the
  state of Bavaria at the Nuremberg Campus of Technology (NCT). NCT is
  a research cooperation between the Friedrich-Alexander-Universit\"at
  Erlangen-N\"urnberg (FAU) and the Technische Hochschule N\"urnberg
  Georg Simon Ohm (THN).

\iffull
\bibliographystyle{plain}
	\bibliography{cryptobib/abbrev1,cryptobib/crypto,cryptobib/extrarefs}
	\appendix

\ifieeesp\subsection{Protocol extension for multiple redactions} \label{apx:extension}\else \section{Protocol extension for multiple redactions} \label{apx:extension} \fi

In this section we sketch an extension to the protocol of~\cref{fig:protocol} to accommodate multiple redactions per block.

The intuition behind the extension is simple enough to be explained in this paragraph; a block can potentially be redacted $n$ times and each redaction $\candidateblk_j$ of the block $B_j$ that is approved \emph{must} contain information about the entire history of previous redactions. In our extension, this information is stored in the $y^\star_j$ component of the candidate block $\candidateblk_j$. We now sketch the required protocol changes. 

\paragraph{Proposing an Edit} To propose a redaction for block $B_j := \langle s_j,x_j,\ctr_j, y_j\rangle$ the user must build a candidate block $\candidateblk_j$ of the following form: $\candidateblk_j := \langle s_j,x^\star_j,\ctr_j, y^\star_j \rangle$, where $y^\star_j := y_j || G(s_j,x_j)$ iff $y_j \ne G(s_j,x_j)$. Note that for the first redaction of $B_j$, we have that $y_j = G(s_j,x_j)$, and therefore $y^\star_j:=G(s_j,x_j)$.

\paragraph{Block Validation} To validate a block, the users run the $\blockValidext$ algorithm described in~\cref{alg:validateblk2}. Intuitively, the algorithm performs the same operations as~\cref{alg:validateblk}, except that it takes into account the possibility of the block being redacted multiple times. Observe that by parsing $y$ as $y^{(1)}||y^{(2)}||...||y^{(l)}$, we are considering a block that has been redacted a total of $l$ times and $y^{(1)}$ denotes the original state information of the unredacted version of the block.

\begin{algorithm}
\footnotesize
\SetAlgoVlined
\SetKwInOut{Input}{input}\SetKwInOut{Output}{output}
\Input{Block $B:=\blockmod$.}
\Output{$\bit$}
\BlankLine
Validate data $x$, if \emph{invalid} \Return $0$\;
Parse $y$ as $y^{(1)}||y^{(2)}||...||y^{(l)}$, where $y_j^{(i)} \in \bit^\spar$ $\forall i\in[l]$\;
\lIf{$(H(\ctr,G(s,x),y) < D) \lor (H(\ctr, y^{(1)},y^{(1)}) < D)$\\}{\Return $1$} \label{line:powext}
\lElse{\Return $0$}
\caption{$\blockValidext$}	
\label{alg:validateblk2}
\end{algorithm} 

\begin{algorithm}
\footnotesize
	\SetKwInOut{Input}{input}\SetKwInOut{Output}{output}
\Input{Chain $\chain = (B_1,\cdots,B_n)$ of length $n$, and a  candidate block $\candidateblk_j$ for an edit.}
\Output{$\bit$}
\BlankLine
Parse $\candidateblk_j := \langle s_j,x^\star_j, \ctr_j, y_j \rangle$\;
Parse $y_j$ as $y_j^{(1)}||y_j^{(2)}||...||y_j^{(l)}$, where $y_j^{(i)} \in \bit^\spar$ $\forall i \in [l]$\;
\lIf{$\Blk'.\blockValidext(\candidateblk_j) = 0$}{\Return $0$}
Parse $B_{j-1}:= \blockmodi{j-1}$\;
Parse $y_{j-1}$ as $y_{j-1}^{(1)}||y_{j-1}^{(2)}||...||y_{j-1}^{(l')}$, where $y_{j-1}^{(i)} \in \bit^\spar$ $\forall i \in [l']$\;
Parse $B_{j+1}:= \blockmodi{j+1}$\;
\lIf{$s_j \ne H(\ctr_{j-1},y_{j-1}^{(1)},y_{j-1}^{(1)}) \lor s_{j+1} \ne H(\ctr_j,y^{(1)}_j,y^{(1)}_j)$}{\Return $0$}
\For{$i \in \{2,\ldots,n\}$} {\lIf{ the fraction of votes for $H(\ctr,y_j^{(i)},y_j^{(1)}||\ldots||y_j^{(i-1)})$ in the chain $\chain$ is \emph{not} at least $\rho$ within its voting period of $\ell$ blocks}{\Return $0$}}
\Return $1$\label{line:links2}

\caption{$\decideext$}
\label{alg:candidate2}
\end{algorithm}

\begin{algorithm}
\footnotesize
\SetAlgoVlined
\SetKwInOut{Input}{input}\SetKwInOut{Output}{output}
\Input{Chain $\chain = (B_1,\cdots,B_n)$ of length $n$.}
\Output{$\bit$}
\BlankLine
$j:=n$\;
\lIf{$j = 1$}{\Return $\Blk'.\blockValidext(B_1)$}
\While{$j \ge 2$}{ 
$B_j:= \langle s_j,x_j,\ctr_j,y_j\rangle$ \Comment*{\scriptsize{$B_j := \head{\chain}$ when $j=n$}}
$B_{j-1:}:= \langle s_{j-1},x_{j-1},\ctr_{j-1},y_{j-1}\rangle$\;
Parse $y_j$ as $y_j^{(1)}||\dots||y_j^{(l)}$, where $y_j^{(i)} \in \bit^\spar$ $\forall i\in[l]$\;
Parse $y_{j-1}$ as $y_{j-1}^{(1)}||\dots||y_{j-1}^{(l')}$, where $y_{j-1}^{(i)} \in \bit^\spar$ $\forall i\in[l']$\;
\lIf{$\Blk'.\blockValidext(B_j) = 0$}{\Return $0$} 
\lIf{$s_j = H(\ctr_{j-1}, G(s_{j-1},x_{j-1}),y_{j-1})$}{$j := j-1$}\label{line:new_state2}
\lElseIf{$s_j = H(\ctr_{j-1}, y^{(1)}_{j-1},y^{(1)}_{j-1}) \land \Blk'.\decideext(\chain,B_{j-1}) = 1 \land \policy(\chain,B_{j-1}) =\accept$}{$j := j-1$}\label{line:old_state2}
\lElse{\Return $0$}
}
\Return $1$\;
\caption{$\chainValidext$}	
\label{alg:validate2}
\end{algorithm}

\paragraph{Voting for Candidate Blocks} To vote for a redaction, we additionally define the following interface. 
\begin{itemize}
	\item $H(\ctr,G(s,x^\star),y^\star) \gets \Blk'.\votefunc(\candidateblk)$: takes as input a candidate block $\candidateblk$ and parses $\candidateblk$ as $(s,x^\star,\ctr,y^\star)$. It outputs the hash value $H(\ctr,G(s,x^\star),y^\star)$ as a vote for the candidate block $\candidateblk$.
	
\end{itemize}

The voting interface is invoked by users that wish to endorse a candidate block by including a vote in the newly mined block (if the candidate block is still in its voting phase).
Accordingly the policy $\policy$ of the chain for redactions checks if a candidate block has received at least a ratio of $\rho$ votes (as output by the $\Blk'.\votefunc$) in a span of $\ell$ blocks after immediately its proposal.

\paragraph{Candidate Block Validation} If a block $B_j$ is being redacted more than once, then the corresponding candidate block $B^\star_j$ needs to be validated for accounting for the multiple redactions that happened before; for each redaction of $B_j$, the votes for that redaction must exist in the chain $\chain$. $\decideext$ (described in~\cref{alg:candidate2}) validates such a candidate block.

\paragraph{Chain Validation} To validate a chain, the user runs the $\chainValidext$ algorithm (described in~\cref{alg:validate2}). The only change compared to~\cref{alg:validate} is that now $y_j$ is parsed as $y_j^{(1)}||\dots||y_j^{(l)}$ where the initial unredacted state of the block is stored in $y^{(1)}$.


\fi

\ifieeesp
\bibliographystyle{IEEEtranS}
	\bibliography{cryptobib/abbrev3,cryptobib/crypto,cryptobib/extrarefs}
	
\fi
 


\end{document}